\documentclass[11pt]{article}

\usepackage[T1]{fontenc}
\usepackage[utf8]{inputenc}

\usepackage[margin = 0.9in]{geometry}
\usepackage{booktabs} 
\usepackage{enumitem}

\usepackage{latexsym,amsmath,amsfonts,amssymb,stmaryrd,mathtools}
\usepackage{amsthm}
\usepackage{xcolor}
\usepackage{algorithm, algorithmicx}
\usepackage[noend]{algpseudocode}
\usepackage{graphicx}
\usepackage{hyperref}
\usepackage{wrapfig}
\usepackage{listings, fancyvrb, multicol}
\usepackage{multirow}
\usepackage{times}
\usepackage{comment}
\usepackage{authblk}
\usepackage{xspace}
\usepackage{pifont}
\usepackage{parcolumns}

\usepackage{makecell}

\definecolor{ao}{rgb}{0.0, 0.5, 0.0}

\newtheorem{theorem}{Theorem}[section]
\newtheorem{lemma}[theorem]{Lemma}

\newtheorem{claim}[theorem]{Claim}

\newtheorem{definition}[theorem]{Definition}
\newtheorem{observation}[theorem]{Observation}

\newcommand{\cfont}[1]{\mbox{\tt\bf\small #1}}


\newcommand{\var}[1]{\cfont{#1}}

\newcommand{\rd}{\cfont{read}}
\newcommand{\wt}{\cfont{write}}

\newcommand{\retire}{\cfont{retire}}
\newcommand{\swcopy}{\cfont{swcopy}}
\newcommand{\mwcopy}{\cfont{copy}}
\newcommand{\memcopy}{\cfont{memory-to-memory move}}
\newcommand{\wllsc}{\cfont{WeakLLSC}}
\newcommand{\wll}[1]{\cfont{wLL}}
\newcommand{\wvl}[1]{\cfont{VL}}
\newcommand{\wsc}[1]{\cfont{SC}}
\newcommand{\LL}[1]{\cfont{LL}}
\newcommand{\VL}[1]{\cfont{VL}}
\newcommand{\SC}[1]{\cfont{SC}}
\newcommand{\CL}{\cfont{CL}}
\newcommand{\dest}{\cfont{Destination}}

\newcounter{results}
\newcommand{\result}[2]{\refstepcounter{results} \vspace{.03in}{\flushleft\textbf{Result \theresults~(#1)}: \emph{#2}}
\vspace{.05in}}

\newcommand{\hide}[1]{}

\lstset{mathescape,basicstyle=\small\ttfamily, tabsize=2, escapeinside={@}{@},
columns=flexible, showstringspaces=false, frame=single, numberblanklines=false}
\lstset{literate={<|}{{$\langle$}}1  {|>}{{$\rangle$}}1}
\lstset{language=C++, morekeywords={CAS,and,or,in,each,commit,empty,new,job,taken,entry,GOTO,bool}}
\makeatletter
\lst@Key{countblanklines}{true}[t]%
{\lstKV@SetIf{#1}\lst@ifcountblanklines}

\def\StartLineAt#1{\lstset{firstnumber=#1}}

\lst@AddToHook{OnEmptyLine}{%
  \lst@ifnumberblanklines\else%
  \lst@ifcountblanklines\else%
  \advance\c@lstnumber-\@ne\relax%
  \fi%
  \fi}
\makeatother


\usepackage{subcaption} 

\begin{document}

	\title{LL/SC and Atomic Copy: Constant Time, Space Efficient Implementations using only pointer-width CAS}
	
	  \author[1]{Guy E. Blelloch}
	  \author[1]{Yuanhao Wei}
	  \affil[1]{Carnegie Mellon University, USA}
	  \affil[ ]{\textit{\{guyb, yuanhao1\}@cs.cmu.edu}}

	\maketitle	
	\begin{abstract}
When designing concurrent algorithms, Load-Link/Store-Conditional (LL/SC) is 
often the ideal primitive to have because unlike Compare and Swap (CAS), LL/SC is immune to the ABA problem.
Unfortunately, the full semantics of LL/SC are not supported in hardware by any modern 
architecture, so there has been a significant amount of work on simulations of 
LL/SC using CAS, a synchronization primitive that enjoys 
widespread hardware support. However, all of the algorithms so far 
that are constant time either use unbounded sequence numbers (and thus 
base objects of unbounded size), or require $\Omega(MP)$ space for $M$ LL/SC object (where $P$ is the number of processes).

We present a constant time implementation of $M$ LL/SC objects using $\Theta(M+kP^2)$ space (where $k$ is the number of outstanding LL operations per process) and requiring only pointer-sized CAS objects. 
In particular, using pointer-sized CAS objects means that we do not use unbounded sequence numbers.
For most algorithms that use LL/SC, $k$ is a small constant, so this result implies that these algorithms can be implemented directly from CAS objects of the same size with asymptotically no time overhead and $\Theta(P^2)$ additive space overhead.
This $\Theta(P^2)$ extra space is paid for once across all the algorithms running on a system, so asymptotically, there is little benefit to using CAS over LL/SC.
Our algorithm can also be extended to implement $L$-word $LL/SC$ objects 
in $\Theta(L)$ time for $LL$ and $SC$, $O(1)$ time for $VL$, and $\Theta((M+kP^2)L)$ space.

To achieve these bounds, we begin by implementing a new primitive called
Single-Writer Copy which takes a pointer to a word sized memory location 
and atomically copies its contents into another object. The 
restriction is that only one process is allowed to write/copy into the destination object at a time.
We believe this primitive will be very useful in designing other concurrent algorithms as well. 
\end{abstract}

\section{Introduction}

In lock-free, shared memory programming, it's well known that the choice 
of atomic primitives makes a big difference in terms of ease of 
programability, efficiency, and even computability. 
Most processors today support a set of basic synchronization primitives such 
as Compare-and-Swap, Fetch-and-Add, Fetch-and-Store, etc. However, many 
useful primitives are not supported, which motivates the need for efficient 
software implementations of these primitives. In this work, we present 
constant time, space-efficient implementations of a widely used primitive 
called Load-Link/Store-Conditional (LL/SC) as well as a new primitive we 
call Single-Writer Copy (\swcopy{}). All our implementations use only 
pointer-width read, write, and CAS. In particular, restricting ourselves 
to pointer-width operations means that we do not use unbounded sequence 
numbers, which are often used in other LL/SC from CAS implementations 
\cite{moir1997practical, michael2004practical, jayanti2005many}. 
Many other algorithms based on CAS also use unbounded sequence numbers 
(often alongside double-wide CAS) to get around the ABA problem and this is sometimes called the \emph{IBM tag methodology}\cite{michael2004aba, ibmtag}. 
Our LL/SC implementation can be used to avoid the use of 
unbounded sequence numbers and double-wide CAS in these algorithms.


We implemented a Single-Writer Atomic Copy (\swcopy{}) primitive and found
that it greatly simplified our implementation of LL/SC. We believe it will be 
useful in a wide range of other applications as well. The \swcopy{} 
primitive can be used to atomically read one memory location and write the 
result into another. The memory location being read can be arbitrary, 
but the location being written to has to be a special \dest{} object. A 
\dest{} object supports three operation, \rd{}, \wt{}, and 
\swcopy{} and it allows any process to \rd{} from it, but only a 
single process to \wt{} or \swcopy{} into it. 
We expect this primitive to be very useful in concurrent algorithms that use 
announcement arrays as it allows the algorithm to atomically read a 
memory location and announce the value that it read.
This primitive can be used to solve various problems related
to resource management, such as concurrent reference
counting, in a constant (expected) time, wait-free manner~\cite{blelloch2020concurrent}.

In this work, we focus on wait-free solutions. Roughly speaking, 
\emph{wait-freedom} ensures that \emph{all} processes are making progress 
regardless of how they are scheduled. 
In particular, this means wait-free algorithms do not suffer from problems such 
as deadlock and livelock.
All algorithms in this paper take in $O(1)$ or $O(L)$ time (where $L$ is the number of words spanned by the implemented object), which is stronger than wait-freedom. The correctness condition we consider is \emph{linearizability}, which intuitively means that all operations appear to take effect at a single point.


In our results below, the time complexity of an operation is the number of instructions that 
it executes (both local and shared) in a worst-case execution and space complexity of an object is 
the number of words that it uses (both local and shared). Counting local objects/operations is consistent with previous papers on the topic~\cite{anderson1995large, michael2004practical, jayanti2005many}.
There as been a significant amount of prior work on implementing LL/SC from CAS~\cite{anderson1995multi, moir1997practical, jayanti2003efficient, jayanti2005many, michael2004practical} and we discuss them in more detail in Section \ref{sec:related}.

\result{Load-Link/Store-Conditional} {
\label{result:llsc}
  A collection of $M$ LL/SC objects operating on $L$-word values shared by $P$ processes, each performing at most $k$ outstanding LL operations, can be implemented with:
  \begin{enumerate}
    \item $\Theta(L)$ time for LL and SC, $O(1)$ time for VL,
    \item $\Theta((M+kP^2)L)$ space,
    \item single word (at least pointer-width) read, write, CAS.
  \end{enumerate}
}


For many data structures implemented from LL/SC, such as Fetch-And-Increment~\cite{ellen2013optimal} and various Universal Construction~\cite{herlihy1993methodology, barnes1993method, afek1995wait}, $k$ is at most 2.
Our result implies that we can implement any number of such data structures from \emph{equal sized} CAS while maintaining the same time complexities and using only $\Theta(P^2)$ additional space across all the objects.
In contrast, using previous approaches~\cite{anderson1995large, anderson1995multi, jayanti2003efficient, moir1997practical} to implementing LL/SC from equal sized CAS would require $\Omega(P)$ space \emph{per} LL/SC object.
$\Theta(P^2)$ space overhead is very small compared to the memory size of most machines, so this result says that there is almost no disadvantage, from an asymptotic complexity perspective, to using LL/SC rather than CAS.



We also implement a \dest{} object supporting \rd{}, \wt{} and \swcopy{} with the following bounds.

\result{Single-Writer Copy} {
\label{result:swcopy}
  A collection of $M$ \dest{} objects shared by $P$ processes can be implemented with:
  \begin{enumerate}
    \item $O(1)$ worst-case time for read, write, and \swcopy{}
    \item $\Theta(M+P^2)$ space
    \item single word (at least pointer-width) read, write, CAS.
  \end{enumerate}
}

To help implement the \dest{} objects, we implement a weaker version of LL/SC with the bounds below. Our version of weak LL/SC is a little different from what was previously studied~\cite{anderson1995large, jayanti2003efficient, moir1997practical}. We compare the two in more detail in Section \ref{sec:related}.

\result{Weak Load-Link/Store Conditional} {
\label{result:wllsc}
 A collection of $M$ weak LL/SC objects operating on $L$-word values shared by $P$ processes, each performing at most one outstanding wLL, can be implemented with:
  \begin{enumerate}
    \item $\Theta(L)$ time for wLL and SC, $O(1)$ time for VL,
    \item $\Theta((M+P^2)L)$ space,
    \item single word (at least pointer-width) read, write, CAS.
  \end{enumerate}   
}

Our implementations of \swcopy{} and LL/SC are closely related. 
We begin in Section \ref{sec:wllsc} by 
implementing a weaker version of LL/SC (Result \ref{result:wllsc}). 
Then, in Section \ref{sec:swcopy}, we use this weaker LL/SC to implement 
\swcopy{} (Result \ref{result:swcopy}), and finally, in Section \ref{sec:llsc}, 
we use \swcopy{} to implement the full semantics of LL/SC (Result \ref{result:llsc}). As we shall 
see, once we have \swcopy{}, our algorithm for regular LL/SC is 
almost the same as our algorithm for weak LL/SC.

\section{Related Work}
\label{sec:related}

\textbf{LL/SC from CAS.} Results for implementing LL/SC from CAS are summarized in Table \ref{tab:llsc-results}. The column titled ``Size of LL/SC Object'' lists the largest possible LL/SC object supported by each algorithm. For example, $W-2\log{}P$ means that the implemented LL/SC object can store at most $W-2\log{}P$ bits, and $LW$ means that the implemented object can be arbitrarily large. All the algorithm shown in the table are wait-free and have optimal time bounds.
The time and space bounds listed in Table \ref{tab:llsc-results} are for the common case where $k$ is a constant.

So far, all previous algorithms suffer from one of three drawbacks. They either (1) are not wait-free constant time~\cite{doherty2004bringing, israeli1994disjoint}, (2) use unbounded sequence numbers~\cite{moir1997practical, michael2004practical, jayanti2005many, jayanti2005unknown}, or (3) require $\Omega(MP)$ space~\cite{anderson1995multi, jayanti2003efficient, aghazadeh2016tag, anderson1995large, jayanti2005large, moir1997practical}. There are also some other desirable properties that an algorithm can satisfy. For example, the algorithms by Jayanti and Petrovic~\cite{jayanti2005unknown} and Doherty et al.~\cite{doherty2004bringing} do not require knowing the number of processes in the system. Also, some algorithms are capable of implementing multi-word LL/SC from single-word CAS, whereas others only work when LL/SC values are smaller than word size. 



\textbf{Weak LL/SC from CAS.} A variant of \wllsc{} was introduced by~\cite{anderson1995large} and also studied in~\cite{jayanti2003efficient, moir1997practical}. The version we consider is even less restrictive than theirs because they require a failed \wll{} operation to return the process id of the \wsc{} operation that caused it to fail whereas we don't require failed \wll{} operations to return anything. While prior work is able to implement the stronger version of \wll{}, they either employ stronger primitives like LL/SC~\cite{anderson1995large}, use unbounded sequence numbers~\cite{moir1997practical}, require $O(MP)$ space for $M$ \wllsc{} objects~\cite{anderson1995large, jayanti2003efficient}, or require storing $(4\log{} P)$-bits in a single word~\cite{jayanti2003efficient}. To match the bounds stated in Result \ref{result:wllsc}, we define and implement a version of weak LL/SC that is sufficient for our \swcopy{} algorithm. Conveniently, the majority of our weak LL/SC algorithm from Section \ref{sec:wllsc} can be reused when implementing full LL/SC in Section \ref{sec:llsc}.

\textbf{Atomic Copy.} A similar primitive called \memcopy{} was studied in Herlihy's wait-free hierarchy paper~\cite{herlihy1991wait}. The primitive allows atomic reads and writes to any memory location and supports a \var{move} instruction which atomically copies the value at one memory location into another. Herlihy showed that this primitive has consensus number infinity. Our \swcopy{} is a little different because it allows arbitrary atomic operations (e.g. Fetch-and-Add, Compare-and-Swap, Write, etc) on the source memory location as long as the source objects supports an atomic read. Another difference is that we restrict the destination of the copy to be single-writer. Herlihy's proof that \memcopy{} has unbounded consensus number would also work with our \swcopy{} primitive. This means that \swcopy{} objects (or more precisely, the \dest{} objects defined in Section \ref{sec:swcopy-alg}) also have consensus number infinity.

{\renewcommand{\arraystretch}{1.5}
\begin{table}
\small
  \centering
    \begin{tabular}{ | l | c | c | c | c | }
    \hline
    \textbf{Prior Work} & \thead{\textbf{Word Size (W)}} & \thead{\textbf{Size of} \\ \textbf{LL/SC Object}} & \textbf{Time} & \textbf{Space}  \\
    \hline
    \multicolumn{1}{|m{3cm}|}{Anderson and Moir~\cite{anderson1995multi}, Figure 1} & $W > 2\log P$ & $W - 2\log P$ & $O(1)$ & $O(P^2M)$ \\
    \hline
    Moir~\cite{moir1997practical}, Figure 4 & unbounded\footnotemark[1] & $W -$ tag\_size & $O(1)$ & $O(P + M)$  \\
    \hline
    Moir~\cite{moir1997practical}, Figure 7 & $W > 3\log P$ & $W - 3\log P$  & $O(1)$ & $O(P^2 + PM)$  \\
    \hline
    Jayanti and Petrovic~\cite{jayanti2003efficient} & $W \geq 4\log P$ & $W$ & $O(1)$ & $O(PM)$  \\
    \hline
    Michael~\cite{michael2004practical}  & unbounded\footnotemark[1] & $LW$ & $O(L)$\footnotemark[2] & $O((P^2 + M)L)$  \\
    \hline
    Jayanti and Petrovic~\cite{jayanti2005many} & unbounded\footnotemark[1] & $LW$ & $O(L)$ & $O((P^2 + M)L)$  \\
    \hline
    Jayanti and Petrovic~\cite{jayanti2005unknown} & unbounded\footnotemark[1] & $W$ & $O(1)$ & $O(P^2 + PM)$ \\
    \hline
    Aghazadeh et al.~\cite{aghazadeh2016tag} & $W \geq 2\log M + 6\log P$ & $LW$ & $O(L)$ & $O(MP^5L)$ \\
    \hline
    \multicolumn{1}{|m{3cm}|}{Anderson and Moir~\cite{anderson1995large}, Figure 2} & $W \geq$ ptr\_size & $LW$ & $O(L)$ & $O(P^2ML)$  \\
    \hline
    \textbf{This Paper} & $W \geq$ ptr\_size & $LW$ & $O(L)$ & $O((P^2 + M)L)$  \\
    \hline
    \end{tabular}
    \caption{Cost of implementing $M$ LL/SC variables from
      CAS. Size is measured in number of bits. 
      The time and space bounds are presented in the common case where the maximum number of outstanding LL operations is a constant. }
    \label{tab:llsc-results}
\end{table}}

\footnotetext[1]{Uses unbounded sequence numbers}
\footnotetext[2]{Amortized expected time}

\section{Preliminaries}

We work in the standard asynchronous shared memory model~\cite{attiya2004distributed} with $P$ processes communicating through base objects that are either registers, CAS objects, or LL/SC objects. Processes may fail by crashing.
All base objects are word-sized and we assume they are large enough to store pointers into memory.

In our model, an \emph{execution} (or equivalently, \emph{execution history}) is an alternating sequence of \emph{configurations} and \emph{steps} $C_0$, $e_1$, $C_1$, $e_2$, $C_2$, $\dots$, where $C_0$ is an \emph{initial configuration}. Each step is a shared operation on a base object. Configuration $C_i$ consists of the state of all base objects, and every process after the step $e_i$ is applied to configuration $C_{i-1}$.

If configuration $C$ proceeds configuration $C'$ in an execution, the \emph{execution interval} from $C$ to $C'$ is the set of all configurations and steps between $C$ and $C'$, inclusive. Similarly, the \emph{execution interval} of an operation is the set of all configurations and steps from the first step of that operation to the last step of that operation. The execution interval for an \emph{incomplete operation} is the set of all configurations and steps starting from the first step of that operation. 

We say the implementation of an object is \emph{linearizable}~\cite{herlihy1990linearizability} if, for
every possible execution and for each operation on that object in the execution, we can pick a configuration or step in its execution interval to be its linearization point, such that the operation appears to occur instantaneously at this point. In other words, all operations on the object must behave as if they were performed sequentially, ordered by their linearization points. If multiple operations have the same linearization point, then an ordering must be defined among these operations.

All implementations that we discuss will be \emph{wait-free}. This means that each operation by any non-faulty process $p_i$ is guaranteed to complete within a finite number of steps by $p_i$. 

Consider an execution where the base objects are LL/SC objects.
If a process performs an LL operation, then the LL is considered to be \emph{outstanding} until the process performs a corresponding SC on the same object.
For algorithms that use LL/SC as base objects, we frequently use $k$ to denote the maximum number of outstanding LL operations per process at any point during an execution.
If $k = 1$, then each process alternates between performing LL and SC.

	
\section{Weak LL/SC from CAS}
\label{sec:wllsc}

As a subroutine, our \swcopy{} operation makes use of a weaker version of 
LL/SC. This weaker version supports three operations \wll{}, \wvl{} and \wsc{}, and works 
the same way as regular LL/SC except that \wll{} is allowed to not return anything 
if the subsequent \wsc{} is guaranteed to fail. We call a \wll{} operation 
\emph{successful} if it returns a value. Otherwise, we call it \emph{unsuccessful}. 
We call a \wsc{} operation successful if it returns true and unsuccessful otherwise.
Note that a \wll{} operation can only be unsuccessful if it is concurrent with 
a successful \wsc{}. We assume that \wvl{} and \wsc{} are only performed if the previous \wll{} by that process was successful.



In Section \ref{sec:wllsc-alg}, we present a constant time algorithm for 
weak LL/SC in the case where the maximum number of outstanding \wll{} operations per process is one.
This version of weak LL/SC is sufficient to implement the other algorithms in our paper.
An \wll{} operation is considered \emph{outstanding} if it is successful and there has not yet been a corresponding \wsc{} operation.

\subsection{Implementation of Weak LL/SC}
\label{sec:wllsc-alg}

In this section, we show how to implement $M$ \wllsc{}
objects, each spanning $L$-words, in wait-free constant time and 
$O((M+P^2)L)$ space. 
The high level idea is to use a layer 
of indirection 
and use an algorithm similar to Hazard Pointers~\cite{michael2004hazard} to upper bound the memory usage. 
Each \wllsc{} object is represented using  
a pointer, \var{buf}, to an $L$-word buffer storing the current value of the object. 
To perform an \wsc{}, the process first allocates a new $L$-word buffer, writes the new value in it, and then tries to write a pointer to this buffer into \var{buf} 
with a CAS. A \wll{} operation simply reads \var{buf} and returns the
value that it points to. The problem with this algorithm
is that it uses an unbounded amount of space. Our goal is to recycle 
buffer objects so that we use at most $O(M+P^2)$ of them. The idea of recycling 
buffers is an important part of many previous algorithms 
\cite{jayanti2005many, michael2004aba, aghazadeh2014making}. 
However, since we are only interested in implementing 
\wllsc{}, we are able to avoid using unbounded sequence numbers and provide 
better time/space complexities.

We recycle buffers with a variant of Hazard Pointers that is \emph{worst-case} constant time rather than \emph{expected} constant time.
Before accessing a buffer, a \wll{} operation has to first protect it by 
writing its address to an announcement array. To make sure that its announcement
happened ``in time'', the \wll{} operation re-reads \var{buf} and makes sure it is the same as what was announced. If \var{buf} has changed, then the \wll{} operation can return 
\var{empty} because it must have been concurrent with a successful \wsc{} and it can
linearize immediately before the linearization point of the \wsc{}. If \var{buf} is equal to the announced
pointer, then the buffer has been protected and the \wll{} operation can 
safely read from it.

A \wvl{} operation by process $p_i$ simply checks if \var{buf} is equal to the buffer announced by its previous \wll{} operation. 
If so, it returns true, otherwise, it returns false.

For the purpose of the \wsc{} operation, each process maintains two lists 
of buffers: a free list (\var{flist}) and a retired list (\var{rlist}).
In a \wsc{} operation, the process allocates by popping a buffer off its 
local free list. If the CAS instruction performed by the \wsc{} is successful, it adds the old value of 
the CAS to its retired list. Each process's free list starts off with $2P$ 
buffers and we maintain the invariant that the free list and retired list 
always add up to $2P$ buffers. When the free list becomes empty and the 
retired list hits $2P$ buffers, the process moves some buffers from the 
retired list to the free list. To decide which buffers are safe to reuse, 
the process scans the announcement array (the scan doesn't have to be atomic) and 
moves a buffer from the retired list to the free list if it was not seen 
in the array. Since the process sees at most $P$ different buffers in
the announcement array during its scan, its free list's size is guaranteed to be at 
least $P$
after this step. 
In a later paragraph, we show how this step can be 
performed in worst-case $O(P)$ time, which amortizes over the 
number of free buffers found. 

Pseudo-code is shown in Figure \ref{alg:weakllsc}. In the pseudo-code, we 
use \var{A[i].}\rd{} and 
\var{A[i].}\wt{} to read from and write to the announcement array \var{A}. 
Since each element of the announcement array is a pointer type,
\rd{} and \wt{} are
trivially implemented using the corresponding atomic instruction. 
We wrap these instructions in function calls so that the code can be 
reused in Section \ref{sec:llsc-alg}.
The argument from the previous paragraph also implies that \var{flist}
cannot be empty on line \ref{line:flist-pop}, so we do not run the risk
of dereferencing an invalid pointer on line \ref{line:buf-init}.
In the pseudo-code, we use \var{T*} to denote a pointer to an object of type $T$ and \var{Value[L]} to denote an array of $L$ word-sized values.
If \var{var} is a variable, \var{\&var} is used to denote the address of that variable.

\textbf{Initialization.} Each \wllsc{} object starts off pointing to a different
Buffer object and each free list is initialized with $2P$ distinct Buffers.
Buffers in the free lists are not pointed to by any of the \wllsc{} objects
and no Buffer appears in two free lists. This property is maintained as the 
algorithm executes.

\begin{figure*}
\begin{minipage}[t]{.46\textwidth}
	\begin{lstlisting}[linewidth=.99\columnwidth, numbers=left]
shared variables:
	Buffer* A[P]; // announcement array

local variables:
	list@$<$@Buffer*@$>$@ flist;
	list@$<$@Buffer*@$>$@ rlist;
	// initial size of flist is 2P
	// rlist is initially empty

struct Buffer {
	// Member Variables
	Value[L] val;
	int pid;
	bool seen;

	void init(Value[L] initialV) {
		copy initialV into val
		pid = -1; seen = 0; } };

struct WeakLLSC {
	// Member Variables
	Buffer* buf;
	
	// Constructor
	WeakLLSC(Value[L] initialV) {  
		buf = new Buffer(); 
		buf->init(initialV); }
\end{lstlisting}
\end{minipage}\hspace{.3in}
\begin{minipage}[t]{.48\textwidth}
\StartLineAt{28}
\begin{lstlisting}[linewidth=.99\textwidth, xleftmargin=5.0ex, numbers=left]
  optional<Value[L]> $\textbf{wLL}$() {
		Buffer* tmp = buf; @\label{line:wll-read}@
		A[pid].write(tmp); @\label{line:wll-announce}@
		if(buf == tmp)     @\label{line:wll-lin}@
			return tmp->val; @\label{line:wll-ret}@
		else return empty; }

  bool $\textbf{VL}$() {
  	Buffer* old = A[pid].read();
  	return buf == old; }@\label{line:wvl-lin}@

  bool $\textbf{SC}$(Value[L] newV) {
		Buffer* old = A[pid].read();
		Buffer* newBuf = flist.pop();    @\label{line:flist-pop}@
		newBuf->init(newV);           @\label{line:buf-init}@
		bool b = CAS(&buf, old, newBuf); @\label{line:wsc-lin}@
	  if(b) retire(old);
	  else flist.add(newBuf);          @\label{line:flist-push}@
		A[pid].write(NULL);
		return b; }

	void retire(Buffer* old) {
		rlist.add(old);                  @\label{line:rlist-add}@
		if(rlist.size() == 2*P) {
		  list@$<$@Buffer*@$>$@ reserved = [];
		  for(int i = 0; i < P; i++)
		    reserved.add(A[i].read());
		  newlyFreed = rlist \ reserved; @\label{line:setdiff}@
		  rlist.remove(newlyFreed);       @\label{line:rlist-remove}@
		  flist.add(newlyFreed); }}};    @\label{line:flist-add}@
	\end{lstlisting}
\end{minipage}
\caption{Amortized constant time implementation of $L$-word Weak LL/SC
  from CAS. Code for process with id \var{pid}.}
\label{alg:weakllsc}
\end{figure*}

\textbf{linear-time set difference.} The operation \var{rlist \textbackslash{} reserved} on
line \ref{line:setdiff} represent set difference.
What makes Hazard Pointers \emph{expected} rather than \emph{worst-case} 
constant time is that they use a hash table to perform these two steps. 
Instead, we add some space for meta-data in each \var{Buffer} object so that it can 
store a process id, \var{pid}, and a bit, \var{seen}. To perform the set 
difference \var{rlist \textbackslash{} reserved}, the process first visits 
each buffer \var{B} in \var{rlist} and prepares the buffer by setting \var{B.pid}
to its own process id and setting \var{B.seen} to false. Then, the process 
loops through \var{reserved} and for each buffer, it 
sets \var{seen} to true if \var{pid} equals its own process id. Next, 
the process loops through \var{rlist} again and constructs a list of 
buffers that have not been seen. This list is the result of the set 
intersection. Finally, 
the process has to reset everything by setting \var{B.pid} to $\bot$ 
for each \var{B} in \var{rlist}.

\textbf{Deamortization.} So far, the algorithm we have described takes \emph{amortized} constant time. 
To deamortize it, each process can maintain two sets of retired list and free 
lists. 
Each time the process pops from one free list, it performs a constant amount 
of work towards populating the other. 

\textbf{Space complexity.} The algorithm uses $P$ shared space for the 
announcement array, $O(P^2)$ local space for all the retired and free lists, and 
$O((M+P^2)L)$ shared space for all the buffers and \wllsc{} objects. 
Therefore, its total space usage is $O((M+P^2)L)$. 
In addition, it only uses pointer-width read, write, CAS as atomic operations, so it fulfills the claims in Result \ref{result:wllsc}.


\subsection{Correctness Proof}
\label{sec:wllsc-proof}



We begin by defining some useful terms and then reasoning about the lifecycle of a buffer. 
We will use $M$ to denote the number of \wllsc{} objects.
A buffer can be in one of the following $2P+M$ possible states: it can be pointed to
by a \wllsc{} object, it can be in the retired list of some process,
or it can be in the free list of some process. 
We consider a buffer to be in the retired list of a process if it is in that process's \var{rlist} or if no \wllsc{} object points to it and it is about to be added to that process's \var{rlist}. 
Similarly, we consider a buffer to be in the free list of a process if it is in that process's \var{flist} or if it has been popped off that process's \var{flist} and not yet written into any \wllsc{} object.
We can show by induction that a buffer cannot be in two different states at the same time. 
For example, if a buffer is in a process's free list, then it cannot be an any process's retired list and it cannot be pointed to by any \wllsc{} object. 
We will make use of this fact several times throughout our correctness proof.

The next step is to prove that the linear-time algorithm we described for set difference is correct.

\begin{lemma}
\label{lem:setdiff}
  The algorithm we described for linear-time set difference (Section \ref{sec:wllsc-alg}) is correct when used on line \ref{line:setdiff} of Figure \ref{alg:weakllsc}.
\end{lemma}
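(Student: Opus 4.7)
The plan is to prove correctness by a non-interference argument followed by sequential reasoning. Even though several processes may execute the set-difference routine concurrently, I want to show that while process $p$ is executing it on its own \var{rlist}, no other process writes to the \var{pid} or \var{seen} field of any buffer $B \in p.\var{rlist}$. Once this is established, the correctness of the set difference reduces to reasoning about a single thread toggling flags on its own buffers, which is straightforward.

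The first ingredient is a disjointness invariant: at every configuration, each buffer is either referenced by exactly one \wllsc{} object, or sits in exactly one process's \var{rlist}, or sits in exactly one process's \var{flist}. This is stated informally in the paragraph preceding the lemma and can be proved by induction on execution steps, using the fact that an \wsc{} that installs a new buffer pops the new buffer from its caller's \var{flist} and (if the CAS succeeds) pushes the displaced buffer onto the same caller's \var{rlist}, while the set-difference routine only moves buffers between \var{rlist} and \var{flist} of the same process. Now enumerate every write to \var{B.pid} and \var{B.seen}: the \var{init} method writes both, but only on a buffer just popped from \var{flist}, which by disjointness is in no \var{rlist}; steps 1 and 4 of the set-difference routine write only to buffers in the caller's own \var{rlist}; and step 2 writes \var{seen} only after verifying that \var{B.pid} equals the caller's id. Consequently, for a buffer $B \in p.\var{rlist}$, the field \var{B.pid} is $\bot$ prior to $p$'s step 1, equals $p$ between $p$'s step 1 and step 4, and never equals the id of any $q \neq p$. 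Thus the guard in step 2 of a concurrent call by $q$ can never pass on $B$, and no external writes to \var{B.pid} or \var{B.seen} occur during $p$'s call.

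With non-interference in hand, the remainder is a simple sequential invariant. After $p$'s step 1, every $B \in p.\var{rlist}$ satisfies \var{B.pid} $= p$ and \var{B.seen} $= \false$. In step 2, a buffer $B$ appearing in \var{reserved} has \var{B.pid} $= p$ iff it lies in $p.\var{rlist}$ (all other buffers have \var{pid} equal to $\bot$, the sentinel set by \var{init}, or the id of some $q \neq p$, by the same enumeration of writers), so \var{seen} becomes \true{} exactly on the buffers in $p.\var{rlist} \cap \var{reserved}$. Step 3 therefore returns $p.\var{rlist} \setminus \var{reserved}$, and step 4 restores \var{B.pid} to $\bot$, preserving the invariants for future calls. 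The main obstacle is the non-interference step: it requires carefully tracking the possible values of \var{B.pid} over $B$'s lifetime and confirming that the guard in step 2 blocks all cross-process writes. A secondary subtlety is that \var{reserved} need not be an atomic snapshot of the announcement array, but this is harmless, since set-difference correctness only requires each buffer's membership in \var{reserved} to be well-defined, not a consistent view of the array.
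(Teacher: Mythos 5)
Your proposal is correct and follows essentially the same route as the paper's proof: reduce correctness to non-interference on the \var{pid} and \var{seen} fields of buffers in the caller's retired list, then establish non-interference by showing that \var{B.pid} can equal a process id $q$ only while $B$ sits in $q$'s retired list (since each set-difference call temporarily sets and then resets the \var{pid} fields of a retired list that does not change during the call). Your treatment is slightly more explicit than the paper's in enumerating the writes performed by \var{init} and in noting that \var{reserved} need not be an atomic snapshot, but the decomposition and the key invariant are the same.
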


\begin{proof}
Recall that in the set difference algorithm, each buffer has an extra \var{seen} and \var{pid}
field, and that these fields are only accessed during the set difference computation. 

We begin by arguing that the set difference algorithm is correct as long as no process writes to the \var{pid} or \var{seen} field of a buffer that is in another process's retired list. 
Recall that the first step of the algorithm (when executed by $p_i$) is to set \var{pid} to $i$, and \var{seen} to \var{false} for each buffer in $p_i$'s retired list. 
Then for each buffer in \var{reserved}, it sets \var{seen} to \var{true} if \var{pid} equals $i$. 
The set of buffers in $p_i$'s retired list with \var{seen} equal to \var{false} are returned. 
Note that $p_i$'s retired list remains the same throughout this computation. 
If no other process writes to the \var{pid} or \var{seen} field of any buffer in $p_i$'s retired list, then this computation behaves as if it was executed in a sequential setting and so it returns the correct value.

All that remains is to prove that no process writes to the \var{pid} or \var{seen} field of a buffer that is in another process's retired list. 
Since no buffer can be in two different retired lists, it suffices to show that whenever $p_i$ writes to the \var{pid} or \var{seen} field of a buffer, that buffer is in $p_i$'s retired list. 
From the description of the algorithm, we can see that this holds for the \var{pid} fields. We focus on proving this for the \var{seen} fields. 
The only place where this could potentially not holds is when process $p_i$ loops through the buffers in \var{reserved} and for each buffer, sets \var{seen} to \var{true} if \var{pid} equals $i$.

We argue that a buffer's \var{pid} equals $i$ only if the buffer is in $p_i$'s retired list. 
The \var{pid} field of each buffer is initially $\bot$ and during a set difference operation by process $p_i$, the \var{pid} fields of all the buffers in $p_i$'s retired list get temporarily set to $i$ and then reset to $\bot$. 
Since $p_i$'s retired list stays the same throughout its set difference operation, all the \var{pid} fields that get set to $i$ are reset to $\bot$. 
Therefore whenever $p_i$ sees a buffer with \var{pid} equal to $i$, it knows the buffer is in its retired list.
From the description of the algorithm, we can see that $p_i$ only sets \var{seen} to \var{true} if the buffer is in its retired list.
\end{proof}




Next, we define the linearization points for \wll{}, \wvl{} and \wsc{} operations.

\begin{definition}
The linearization point of a \wsc{} operation is on line \ref{line:wsc-lin}. 
\wvl{} operations are linearized on line \ref{line:wvl-lin}.
For a \wll{} operation, its linearization point depends on whether or not it was successful. 
A successful one is linearized on line \ref{line:wll-lin} whereas an unsuccessful one is linearized at its first step.
\end{definition}

Let $E$ be an execution history of the \wllsc{} implementation. 
We assume $E$ is a valid execution history where $p_i$ invokes an \wsc{} or a \wvl{} on the object \var{X} only after a successful \wll{} on \var{X}.
We also assume that there is at most one outstanding \wll{} operation per process in $E$.
At each configuration, we define the \emph{value} of a \wllsc{} object \var{X}
to be the $L$-word value stored in \var{X.buf->val}.
We define 
To prove that a \wllsc{} object \var{X} is linearizable, it suffices to prove the following properties:

\begin{enumerate}
  \item The value of \var{X} only changes at the linearization point of a 
        successful \var{X.SC} operation.
  \item The linearization point of a successful \var{X.SC(newV)} operation changes 
        the value of \var{X} to \var{newV}.
  \item A successful \var{X.wLL} operation returns the value of \var{X} at its 
        linearization point.
  \item An \var{X.SC} operation $S$ by process $p$ is successful if and only if no successful 
        \var{X.SC} is linearized between the linearization points of 
        $S$ and the last successful \var{X.wLL} before $S$ by process $p$.
  \item An \var{X.VL} operation $V$ by process $p$ is successful if and only if no successful 
        \var{X.SC} is linearized between the linearization points of 
        $V$ and the last successful \var{X.wLL} before $V$ by process $p$.
  \item If \var{X.wLL} is unsuccessful then a successful 
        \var{X.SC} linearized during its execution interval.
\end{enumerate}

To help prove these properties, we make the following observations that are easy
to verify by examining the pseudo-code. The first observation is a weaker version of Property 1.

\begin{observation}
\label{obs:xbuf-change}
  The value of \var{X.buf} can only be changed at the linearization point of a 
  successful \var{X.}\wsc{} operation.
\end{observation}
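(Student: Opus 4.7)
The plan is to prove this observation by direct code inspection, since it is essentially a syntactic property of the pseudo-code in Figure \ref{alg:weakllsc}. First I would enumerate every location in the code where \var{X.buf} can be written. Looking through the pseudo-code, the only write to \var{buf} outside of the constructor is the CAS on line \ref{line:wsc-lin} inside \wsc{}; the \wll{} and \wvl{} operations only read \var{buf}, and the helper routines (\cfont{retire}, set-difference bookkeeping, announcement array manipulation) touch only \var{rlist}, \var{flist}, \var{A}, and the \var{pid}/\var{seen} metadata of buffers.

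Next I would argue that this single write-site changes \var{X.buf} exactly when a \wsc{} is successful. The CAS on line \ref{line:wsc-lin} modifies \var{X.buf} if and only if it returns \true, and by the definition immediately preceding the observation, a \wsc{} is declared successful precisely when that CAS returns \true, with its linearization point set to line \ref{line:wsc-lin}. Hence every modification of \var{X.buf} coincides with the linearization point of some successful \wsc{} on \var{X}, and conversely, unsuccessful \wsc{} executions leave \var{X.buf} untouched.

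The only subtlety worth mentioning is ruling out spurious modifications via aliasing — i.e., confirming that no buffer field update or pointer manipulation elsewhere inadvertently writes through to \var{X.buf}. This is handled by noting that \var{X.buf} is a dedicated member of the \wllsc{} object, distinct from any \var{Buffer} field, and that the only code path taking \var{\&X.buf} as a CAS target is line \ref{line:wsc-lin}. There is no real obstacle here; the observation is meant to be a transparent consequence of the code structure, and the proof can be stated in a sentence or two.
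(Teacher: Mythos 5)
Your proposal is correct and matches the paper's treatment: the paper simply asserts that this observation is ``easy to verify by examining the pseudo-code,'' and your code-inspection argument (the CAS on line \ref{line:wsc-lin} is the sole write site for \var{X.buf}, and it modifies \var{X.buf} only when it returns \true{}, which is exactly when the \wsc{} is successful and linearized there) is precisely that verification spelled out. No gaps; the aliasing remark is a reasonable extra sanity check.
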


The following observation states the converse and it holds because a process's free list
never contains a buffer that is being pointed to by \var{Y.buf} for any
\wllsc{} variable \var{Y}. This implies that \var{old != newbuf} at the
linearization point of each successful \wsc{} operation.

\begin{observation}
\label{obs:xbuf-change2}
Each successful \var{X.}\wsc{} operation changes \var{X.buf} at its linearization
point.
\end{observation}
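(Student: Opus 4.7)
My plan is to show that the two CAS arguments \var{old} and \var{newBuf} on line \ref{line:wsc-lin} must be distinct whenever the CAS succeeds; this immediately implies that \var{X.buf} changes at the linearization point. The central tool is the state-uniqueness invariant already asserted in the text: a buffer cannot simultaneously be in two of the $2P+M$ states (pointed to by some \wllsc{} object, in some process's retired list, or in some process's free list), where ``free list'' is interpreted to include buffers that have been popped but not yet installed into a \wllsc{} object.

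First I would track \var{newBuf}: it is obtained via \var{flist.pop()} on line \ref{line:flist-pop}, and by the extended definition of ``free list state'' given in Section \ref{sec:wllsc-proof}, it remains in the executing process's free-list state throughout the window between the pop and either (i) the successful CAS installing it into \var{X.buf}, or (ii) the subsequent \var{flist.add(newBuf)} on line \ref{line:flist-push}. In particular, at the instant of the CAS itself, \var{newBuf} is still in the free-list state.

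Next I would use the success of the CAS: if the CAS succeeds, then \var{X.buf == old} immediately before the step, so \var{old} is in the ``pointed-to by a \wllsc{} object'' state. By the state-uniqueness invariant, \var{newBuf} is not pointed to by any \wllsc{} object at that moment. Combining the two gives \var{old} $\neq$ \var{newBuf}, so the successful CAS replaces \var{X.buf} with a strictly different pointer, completing the argument.

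The only mildly delicate point is making sure the definition of ``free list state'' really covers the entire interval from \var{flist.pop()} through the CAS. This is a bookkeeping matter rather than a deep obstacle; once the state-uniqueness invariant is accepted (it is to be established by a separate induction on the execution, the outline of which is already sketched in the preceding paragraphs of the text), Observation \ref{obs:xbuf-change2} follows directly and does not require any further reasoning about concurrent interleavings.
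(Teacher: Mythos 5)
Your proposal is correct and matches the paper's own justification, which likewise argues that \var{old} $\neq$ \var{newBuf} because a free list never contains a buffer currently pointed to by any \wllsc{} object, relying on the same state-uniqueness invariant (with the same extended reading of ``in the free list'' covering popped-but-not-installed buffers). You simply spell out the bookkeeping in more detail than the paper does.
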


\textbf{Proof of Property 1. } Let $s$ be a step in the execution history $E$.
Suppose $s$ is not the linearization point of a successful \var{X.}\wsc{} operation. 
We want to show that $s$ could not have changed the value of \var{X}.
By observation \ref{obs:xbuf-change}, we know that $s$ could not have changed the
value of \var{X.buf}. Therefore, we only need to worry about writes to the array \var{X.buf->val}. 
This array can only be written to if \var{X.buf} is in some process's
free list and it cannot be in any process's free list because it is being pointed to by \var{X.buf}.

\textbf{Proof of Property 2. } To prove this property, we just need to show that
\var{newbuf->val} equals \var{newV} on line \ref{line:wsc-lin} of a \wsc{}(\var{newV}) operation.
This holds because \var{newV} was written to \var{newbuf->val} on line \ref{line:buf-init} and no other process can write to \var{newbuf->val}
between the start of line \ref{line:buf-init} and the execution of line 
\ref{line:wsc-lin}. This is because a process can only write to buffers
that it has in its free list.

\textbf{Proof of Property 3. }
A successful \var{X.wLL} operation returns on line
\ref{line:wll-ret}. This line is not atomic because \var{tmp->val} could be an array
of words. Since \var{tmp->val} contains the value of \var{X} at line \ref{line:wll-lin} by definition, it suffices to show that this array
cannot be written to between line \ref{line:wll-lin} and the end of the \var{wLL} operation. This would mean that the \var{X.wLL} operation sees a consistent
snapshot of the array on line \ref{line:wll-ret}, and moreover, it would mean that
the $L$-word value that \var{X.wLL} reads from \var{tmp->val} is equal to the value of \var{X} at the \var{X.wLL}'s linearization point.
Thus, it suffices to show that \var{tmp->val} cannot be written to between line \ref{line:wll-lin} and the end of the \var{wLL} operation.
To show this, we take advantage of the fact that \var{X.buf} and \var{A[i]} both point to the
same buffer, \var{tmp}, at line \ref{line:wll-lin}. 
Also note that \var{A[i]} remains equal to \var{tmp} until the end of the \var{wLL} operation. As long as \var{A[i]} equals \var{tmp}, \var{tmp} cannot appear in the
free list of any process and we prove this fact in Claim \ref{clm:reserved}.
If \var{tmp} cannot appear in the free list of any process between line \ref{line:wll-lin} and the end of the \var{wLL} operation, then 
the array \var{tmp->val} cannot be written to in this interval.

\begin{claim}
\label{clm:reserved}
If at configuration $C$, both \var{X.buf} and \var{A[i]} point to buffer
$b$, then until \var{A[i]} changes, $b$ cannot appear in the free list of any process.
\end{claim}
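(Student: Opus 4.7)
The plan is an argument by contradiction: assume $b$ enters some process's free-list state during the interval $I$ that runs from $C$ until the step that first changes $A[i]$ away from $b$, and let $s$ be the earliest such step, executed by process $p_j$. First I would observe that the only instruction that can transition a buffer into free-list state is line~\ref{line:flist-add}. Indeed, the extended notion of ``in the free list'' already includes buffers popped at line~\ref{line:flist-pop} but not yet installed into any \wllsc{} variable, so a pop is not an entry point; line~\ref{line:flist-push} merely reinstates a buffer that is already in free-list state; and buffers currently in a ``pointed to by \wllsc{}'' state or in some \var{rlist} have no other edge into free-list state. Hence $s$ is a line~\ref{line:flist-add} step that moves $b$ from $p_j$'s \var{rlist} to $p_j$'s \var{flist} as part of some cleanup block $\mathcal{K}$ inside a \var{retire} call.

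Next I would pin down that $\mathcal{K}$ begins strictly after configuration $C$, so that its entire execution lies inside $I$. Between the start of $\mathcal{K}$ (where \var{reserved} is initialized) and line~\ref{line:rlist-remove}, $p_j$'s \var{rlist} is not mutated, so $b$ must sit in $p_j$'s \var{rlist} throughout that stretch. At configuration $C$, however, $b$ is in the ``pointed to by $X$'' state, and by the single-state property of buffers established at the start of the proof section it cannot simultaneously lie in any \var{rlist}. Thus $\mathcal{K}$ cannot contain $C$; and if $\mathcal{K}$ had started before $C$ with $b$ already in rlist then line~\ref{line:rlist-remove} (and hence $s$, which comes immediately after within the same cleanup) would have to fire before $C$, contradicting $s \in I$. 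Any alternative explanation involving a different cleanup by $p_j$ removing $b$ before $C$ and then re-adding it is ruled out by the fact that $p_j$ executes its \var{retire} calls sequentially. So $\mathcal{K}$ starts strictly after $C$ and, since it finishes at $s \in I$, the entire scan of $A$ takes place at configurations in $I$.

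Finally, because every read \var{A[i'].read()} in the scan loop occurs at a configuration in $I$ and $A[i] = b$ throughout $I$ by hypothesis, the read of $A[i]$ returns $b$, placing $b$ into \var{reserved}. The set difference on line~\ref{line:setdiff} therefore excludes $b$ from \var{newlyFreed}, so $s$ cannot add $b$ to \var{flist}, yielding the desired contradiction. The main obstacle is the middle step: ruling out the scenario in which $\mathcal{K}$'s scan of the announcement array is performed before $A[i]$ was set to $b$. Handling this cleanly is exactly where the buffer state invariant and the sequentiality of each process's \var{retire} calls do the real work.
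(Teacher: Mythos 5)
Your proposal is correct and takes essentially the same approach as the paper's (much terser) proof: at $C$ the buffer is in the ``pointed to by \var{X.buf}'' state and hence in no retired or free list, it can only reach a free list by first being retired and then released by a cleanup scan, and any such scan must begin after $C$ and therefore reads $b$ from \var{A[i]}, placing it in \var{reserved}. The detail you add about why the scan cannot predate $C$ is exactly what the paper leaves implicit.
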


\begin{proof}
At configuration $C$, $b$ cannot be in any process's retired or free lists
because it is being pointed to by \var{X.buf}.
In order for $b$ to appear in a free list, $b$ must first be
added to a process's retired list, then move onto that process's free list.
However, after $b$ is added to a 
process's retired list, that process will not add $b$ to its free list as
long as \var{A[i]} points to $b$.
\end{proof}

\textbf{Proof of Properties 4 and 5. } Both properties follow directly from the following claim.

\begin{claim}
\label{clm:prop4}
Let $O$ be either a \wsc{} or a \wvl{} operation and let $L$ be the last successful \wll{} operation before $O$ by the same process.
$O$ returns true if and only if no successful \wsc{} operation linearized between the linearization points of $L$ and $O$.
\end{claim}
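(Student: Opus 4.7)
The plan is to reduce Claim~\ref{clm:prop4} to the single statement that $O$ succeeds if and only if \var{X.buf} equals $b$ at $O$'s linearization point, where $b$ is the buffer that \var{X.buf} points to at $L$'s linearization point. Let $p$ denote the process performing both $L$ and $O$.

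The key invariant I would establish first is that \var{A[pid]} remains equal to $b$ continuously from $L$'s linearization through $O$'s linearization. At $L$'s linearization (line~\ref{line:wll-lin}), the preceding announce step (line~\ref{line:wll-announce}) has already written $b$ into \var{A[pid]}. Between $L$ and $O$, the well-formedness assumption $k=1$ together with the requirement that the previous \wll{} of $O$ be successful forces $p$ to perform no other \wll{} or \wsc{}, so $p$ itself does not overwrite \var{A[pid]}; the only write to \var{A[pid]} inside $O$ is \var{A[pid].write(NULL)} at the end of \wsc{}, which occurs strictly after the linearization point; and since \var{A[pid]} is single-writer, no other process can touch it. Given this invariant, the value \var{old} that $O$ reads from \var{A[pid]} equals $b$, so both the \wvl{} comparison (line~\ref{line:wvl-lin}) and the \wsc{} CAS (line~\ref{line:wsc-lin}) reduce to testing whether \var{X.buf} equals $b$ at $O$'s linearization.

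The forward direction then follows immediately from Observation~\ref{obs:xbuf-change}: if no successful \wsc{} linearizes between $L$ and $O$, then \var{X.buf} stays at $b$ throughout, so $O$ returns true.

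The main obstacle is the backward direction, since in principle \var{X.buf} could move away from $b$ and then cycle back to $b$ before $O$ linearizes. This is exactly what Claim~\ref{clm:reserved} rules out: because \var{X.buf == A[pid] == b} holds at $L$'s linearization and \var{A[pid]} remains $b$ until $O$ linearizes, $b$ cannot appear in any process's free list throughout the entire interval. Since every successful \wsc{} installs a buffer it has just popped from its own free list, every \wsc{} linearizing between $L$ and $O$ installs a buffer distinct from $b$; thus \var{X.buf} becomes $\neq b$ at the first such \wsc{} and every subsequent \wsc{} keeps it $\neq b$, so $O$ returns false.
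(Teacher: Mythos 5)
Your proposal is correct and follows essentially the same route as the paper's proof: both first argue that \var{A[pid]} is unchanged between the linearization points of $L$ and $O$ (so success of $O$ reduces to whether \var{X.buf} still equals $b$), handle the easy direction via Observation~\ref{obs:xbuf-change}, and dispose of the potential ABA cycle in the other direction via Claim~\ref{clm:reserved} together with the fact that a successful \wsc{} installs a buffer popped from its own free list. The only cosmetic difference is that you argue the hard direction directly (every intervening \wsc{} installs a buffer $\neq b$) where the paper phrases it as a contradiction using Observation~\ref{obs:xbuf-change2}.
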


\begin{proof}
Let $p_i$ be the process that performed $O$.
Since there is at most one outstanding \wll{} per process, we know that 
$p_i$ does not perform any \wll{} operations on any other \wllsc{} object between $L$ and $O$.
Therefore $A[i]$ does not change between the linearization points of $L$ and $O$.
This means that $O$ returns true if and only if the value of \var{X.buf} is the same at the linearization points of $L$ and $O$.
Thus, it suffices to show that the value of \var{X.buf} is the same
at the two linearization points if and only if no successful \wsc{} operation linearized between them. The backwards direction follows directly from Observation \ref{obs:xbuf-change}.

For the forwards direction, let $b$ represent the value \var{X.buf} at the linearization points of $L$ and $O$.
Since $L$ was a successful \wll{}, we know that at the 
linearization point of $L$, \var{X.buf} and \var{A[i]} both store the pointer $b$.
Since $A[i]$ does not change between the linearization points of $L$ and $O$, by Claim \ref{clm:reserved}, $b$ cannot appear in the free list of any process during this interval.

Suppose for contradiction that a 
successful \wsc{} operation $S'$ linearized in this interval. $S'$ must have
changed \var{X.buf} by observation \ref{obs:xbuf-change2}. Also by Observation 
\ref{obs:xbuf-change2}, another \var{X.SC} operation must have linearized
between the linearization points of $L$ and $O$ that changed 
\var{X.buf} back to $b$. This is a contradiction because $b$ cannot appear
in the free list of any processes during this interval.
\end{proof}

\textbf{Proof of Property 6. } In order for a \var{X.wLL} to be unsuccessful, the
value of \var{X.buf} must have changed between lines \ref{line:wll-read} and 
\ref{line:wll-lin}. By observation \ref{obs:xbuf-change}, there must have been a
successful \var{X.SC} that linearized in this interval, which completes the proof.

    \section{Single-Writer Atomic Copy}
\label{sec:swcopy}

The copy primitive, \swcopy{}, can be used to 
atomically read a value from some source memory location and write it 
into a \dest{} object. It is similar to the \memcopy{} primitive that 
was studied in~\cite{herlihy1991wait}, except that our \dest{} objects are
single-writer and we allow the source memory location
to be modified by any instruction (e.g. write, 
fetch-and-add, swap, CAS, etc). The sequential specifications of \swcopy{} and \dest{} objects are given below.


\begin{definition}
\label{def:swcopy}
A \dest{} object supports 3 operations \rd{}, \wt{} and \swcopy{} 
with the following sequential specifications:
  \begin{itemize}
      \item \rd{}\var{()}: returns the current value in the 
      \dest{} object 
      (initially $\bot$).
      \item \wt{}\var{(Value v)}: sets \var{v} as the current value 
      of the \dest{} object.
      \item \swcopy{}\var{(Value* addr)}: reads the value pointed to 
      by \var{addr} and sets it as the current value of the \dest{}
      object.
  \end{itemize}
Any number of processes can perform \rd{} operations, but only one 
process is allowed to \wt{} or \swcopy{} into a particular \dest{} 
object.
\end{definition}

We restricted this interface to be single-writer because it 
was sufficient for the
use cases we consider. We find that single-writer \dest{} objects 
are very useful in announcement array based algorithms where it
is beneficial for the read and the announcement to happen atomically.
It's possible to generalize this interface to support 
atomic copies that concurrently write to the same destination object. 
However, it is unclear what the desired behavior should be in 
this case.
One option would be to give
atomic copy `store' semantics where the value of the \dest{} object
is determined by the last \wt{} or \mwcopy{} to that location. 
Another option would be to give atomic copy `CAS' semantics where 
the \mwcopy{} is only successful if the \dest{} object stores the
expected value. 
The right choice of definition will likely depend on the potential application. Section \ref{sec:swcopy-alg} describes our implementation of 
\swcopy{}.

\subsection{Algorithm for Single-Writer Atomic Copy}
\label{sec:swcopy-alg}

In this section, we show how to implement \dest{} objects that
support \rd{}, \wt{}, and \swcopy{} in $O(1)$ time and 
$O(M+P^2)$ space (where $M$ is the number of \dest{} objects).
Our algorithm only requires pointer-width read, write and CAS instructions. 

We represent a \dest{} object \var{D} internally using a triplet, 
\var{D.val}, \var{D.ptr}, and \var{D.old}. When there is no \swcopy{} in 
progress, \var{D.val} stores the 
current value of the \dest{} object. When there is a copy in progress, \var{D.ptr} stores a 
pointer to the location that is being copied from. Operations
that see a copy in progress will help complete the copy. Finally, 
\var{D.old} stores the previous value of the \dest{} object. 
The variables \var{D.val} and \var{D.ptr} are stored together in a 
\wllsc{} object (defined in Section \ref{sec:wllsc}). This allows us 
to read from and write to them atomically as well as prevent any potential 
ABA problems. The downside is that the only way to read \var{D.val} 
or \var{D.ptr} is through a \wll{} operation which can repeatedly fail 
due to concurrent \wsc{} operations. For this reason, we keep 
\var{D.old} in a separate object, so that the readers can return 
\var{D.old} if they fail too many times on \wll{}. Readers will only perform \wsc{} operations that change \var{D.ptr} from not \var{NULL} to \var{NULL}. Therefore, the writer's \wll{} will be successful whenever \var{D.ptr} is \var{NULL}. 
We will maintain the invariant that \var{D.ptr} is \var{NULL} whenever there is no concurrent \swcopy{}.
We also ensure that \var{D.ptr} changes exactly twice during each \swcopy{}.
The first change writes a valid pointer and the second change resets it back to \var{NULL}.

A \swcopy{}\var{(Value* src)} on \dest{} object 
\var{D} begins by backing up the current value from \var{D.val} 
into \var{D.old}. 
At this point, \var{D.ptr} is guaranteed to be
\var{NULL}, so the writer can successfully read \var{D.val} with a
\wll{}. The \swcopy{} proceeds by writing \var{src} into \var{D.ptr}
with a \wsc{}.
Finally, it reads the value \var{v} pointed to by \var{src} and tries to write (\var{v}, \var{NULL}) into (\var{D.val}, \var{D.ptr}) with a \wsc{}.
It's not a problem if the \wsc{} 
fails because that means another process has helped complete the copy. 

To \rd{} from \var{D}, a process begins by trying to read the pair 
(\var{D.val}, \var{D.ptr}) with a \wll{}. If it fails on this 
\wll{} twice, then it is safe to return \var{D.old} because the 
value of $D$ has been updated at least once during this \rd{}. Now 
we focus on the case where one of the \wll{}s succeed and reads (\var{D.val}, \var{D.ptr}) into local variables (\var{val}, \var{ptr}). If \var{ptr} is NULL, then \var{val} 
stores the current value, which the \rd{} returns. If 
\var{ptr} is not NULL, then there is a concurrent \swcopy{} 
operation and the \rd{} tries to help by reading the value 
\var{v} referenced by \var{ptr} and writing (\var{v}, \var{NULL}) 
into (\var{D.val}, \var{D.ptr}) with a \wsc{}. If the \wsc{} is 
successful, then the \rd{} returns \var{v}. Otherwise, the process performs
one last \wll{}. If it is successful and sees that \var{D.ptr} is \var{NULL},
then it returns \var{D.val}. Otherwise, it is safe to return \var{D.old}.

The \wt{} operation is the most straightforward to implement. Since each \dest{} object only has a single writer, a 
\wt{} operation simply uses a \wll{} and a \wsc{}
to store the new value into \var{D.val}.
There cannot be any successful \wsc{} operations concurrent with the \wll{}
because the other processes can only succeed on a \wsc{} 
during a \swcopy{} operation. Therefore, the \wll{} and \wsc{} performed by the 
\wt{} will both always succeeds. The \wt{} operations also needs to keep \var{D.old} up to
data so it, updates it before performing the \wsc{}.

In our algorithm, we assumed that the source objects fit in a single word so that they can be atomically read from and written to. However, this assumption is not necessary. The algorithm can be 
generalized to work for larger source objects as long as they support an 
atomic \rd{} operation.

Pseudo-code is shown in Figure \ref{alg:swcopy}. 
From the pseudo-code, we can see that each operation takes constant time.
To implement $M$ \dest{} objects, it uses $M$ \wllsc{} objects, each spanning two words, and $O(M)$ pointer-width read, write, CAS objects. Using the algorithm from Result \ref{result:wllsc} to implement the \wllsc{} objects, we get an overall space usage of $O(M+P^2)$, which satisfies the properties in Result \ref{result:swcopy}.

\begin{figure*}
\begin{minipage}[t]{.46\textwidth}

  \begin{lstlisting}[linewidth=.99\columnwidth, numbers=left]
struct Data {Value val; Value* ptr;};
struct Destination {
  // Member Variables
  WeakLLSC<Data> data; 
  // data is initially <|@$\bot$@, NULL|>
  Value old;

  void $\textbf{swcopy}$(Value *src) {
    // This wLL() cannot fail
    old = data.wLL().val;          @\label{line:swcopy-ll1}@
    data.SC(<|empty, src|>);       @\label{line:swcopy-sc1}@
    Value val = *src;              @\label{line:swcopy-src}@
    optional<Data> d = data.wLL(); @\label{line:swcopy-ll2}@
    if(d.hasValue() && d.ptr != NULL) @\label{line:swcopy-if}@
      data.SC(<|val,NULL|>); }     @\label{line:swcopy-sc2}@
\end{lstlisting}
\end{minipage}\hspace{.3in}
\begin{minipage}[t]{.48\textwidth}
\StartLineAt{16}
\begin{lstlisting}[linewidth=.99\textwidth, xleftmargin=5.0ex, numbers=left]
void $\textbf{write}$(Value new_val) {
  // This wLL() cannot fail
  old = data.wLL().val;                @\label{line:write-ll}@
  data.SC(<|new_val, NULL|>); }        @\label{line:write-sc}@

Value $\textbf{read}$() {
  optional<Data> d = data.wLL();       @\label{line:read-wll1}@
  if(!d.hasValue()) {
    d = data.wLL();                    @\label{line:read-wll2}@
    if(!d.hasValue()) return old;}        @\label{line:read-old1}@
  if(d.ptr == NULL) return d.val;      @\label{line:read-val1}@
  value v = *(d.ptr);                  @\label{line:read-src}@
  if(data.SC(<|val, NULL|>)) return v; @\label{line:read-sc}@
  d = data.wLL();                      @\label{line:read-wll3}@
  if(d.hasValue() && d.ptr == NULL)    
    return d.val;                      @\label{line:read-val2}@
  return old; } };                     @\label{line:read-old2}@
  \end{lstlisting}
\end{minipage}
 
\caption{Atomic copy (single-writer). 
  Code for process with id \var{pid}.}
\label{alg:swcopy}
\end{figure*}

    \subsection{Correctness Proof}
\label{sec:swcopy-proof}

We begin by defining the linearization points of \wt{} and \swcopy{}. 
The linearization point of \rd{} is more complicated, so we will differ its definition until later.
Each \wt{} operation is linearized on line \ref{line:write-sc}.
For \swcopy{} operations, we will prove in Claim \ref{clm:lin-exists} that there exists exactly one \wsc{} instruction $S$ during the \swcopy{} that sets \var{data.ptr} to \var{NULL} and that this instruction either happens on line \ref{line:swcopy-sc2} of the \swcopy{} or line \ref{line:read-sc} of a concurrent \rd{} $R$. 
If $S$ from line \ref{line:swcopy-sc2}, then the \swcopy{} is linearized when it executes line \ref{line:swcopy-src}. 
Otherwise, the \swcopy{} is linearized on line \ref{line:read-src} of $R$.
We show in Claim \ref{clm:lin-contains} that this linearization point is contained in the execution interval of the \swcopy{}. 
Note that partially complete \swcopy{} operations without a \wsc{} instruction setting \var{data.ptr} to \var{NULL} are not linearized. 

For the purposes of this proof, we will focus on an execution $E$ consisting of operations on a single \dest{} object \var{D}.
For simplicity, we will write \var{data.ptr} instead of \var{D.data.ptr} and \swcopy{} instead of \var{D.}\swcopy{}.
At each configuration $C$ in $E$, we define the \emph{current value} of \var{D} to be the value written by the last modifying operation (either a \wt{} or a \swcopy{}) linearized before $C$. 
To show that the algorithm in Figure \ref{alg:swcopy} is correct, it suffices to show that the value returned by each \rd{} operation is the value of \var{D} at some step during the \rd{}.
The \rd{} is linearized at that step.

We first prove two useful claims about the structure of the algorithm.
Throughout the proof, it's important to remember that there can only be one \wt{} or \swcopy{} operation active at any time.
We say that a pointer is \emph{valid} if it is not \var{NULL}.

\begin{claim}
\label{clm:rd-null}
Suppose the \wsc{} performed by a \rd{} operation is successful, then \var{data.ptr} was valid at all configurations between line \ref{line:read-val1} of the \rd{} and the \wsc{}.
\end{claim}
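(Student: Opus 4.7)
The plan is to derive the claim directly from the correctness properties of the \wllsc{} implementation established in Section~\ref{sec:wllsc-proof}, together with a straightforward inspection of the \rd{} code. The idea is that a successful \wsc{} on line~\ref{line:read-sc} must be paired with a preceding successful \wll{} in the same \rd{}, and the \wllsc{} semantics forces the underlying \var{data} value to remain constant throughout that entire interval.

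First, I would identify the \wll{}--\wsc{} pair inside the \rd{}. Looking at Figure~\ref{alg:swcopy}, the only \wll{} operations performed before line~\ref{line:read-sc} are on lines~\ref{line:read-wll1} and~\ref{line:read-wll2}, and if neither succeeded the code returns on line~\ref{line:read-old1}. Since the \rd{} reached line~\ref{line:read-sc}, at least one of these \wll{}s succeeded; let $L$ be the last such successful \wll{}, and let $d$ be the value it returned. Note $L$ is executed strictly before line~\ref{line:read-val1}. Moreover, because the \rd{} did not return on line~\ref{line:read-val1}, the test \texttt{d.ptr == NULL} failed, so \var{d.ptr} is valid.

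Next, I would invoke Property~4 (Claim~\ref{clm:prop4}) applied to \var{D.data}: since the \wsc{} on line~\ref{line:read-sc} succeeded, no successful \wsc{} on \var{D.data} was linearized between $L$ and this \wsc{}. Combining this with Property~1, the value of \var{D.data} is unchanged throughout the interval from the linearization of $L$ up to (and including the configuration just before) the \wsc{}. By Property~3, the value read by $L$ equals the value of \var{D.data} at $L$'s linearization point, namely $d = (d.\var{val}, d.\var{ptr})$ with \var{d.ptr} valid. Hence at every configuration in this interval, \var{D.data.ptr} equals \var{d.ptr}, which is valid.

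Finally, I would observe that both line~\ref{line:read-val1} of the \rd{} and the \wsc{} on line~\ref{line:read-sc} are executed by the same process after $L$ and before (or at) the \wsc{}, so the entire execution interval in the statement of the claim is contained in the interval on which \var{D.data} is constant; therefore \var{D.data.ptr} is valid at every configuration in it. The only mildly subtle step is ensuring that the last successful \wll{} in the \rd{} genuinely precedes line~\ref{line:read-val1}, which is immediate from the control flow, so I do not expect any real obstacle beyond citing the previously proved \wllsc{} properties.
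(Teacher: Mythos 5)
Your proof is correct and follows essentially the same route as the paper's: identify the successful \wll{} $L$ paired with the \wsc{}, observe that the failed test on line~\ref{line:read-val1} forces the \var{ptr} component read by $L$ to be valid, and use the fact that a successful \wsc{} guarantees \var{data} is unchanged between $L$ and the \wsc{}. The only difference is cosmetic --- you ground the ``unchanged'' step explicitly in Properties 1, 3 and 4 of the \wllsc{} correctness proof, where the paper asserts it directly from the \wllsc{} semantics.
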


\begin{proof}
Let $R$ be a \rd{} operation with a successful \wsc{} operation $S$ on line \ref{line:read-sc}.
Let $L$ be the successful \wll{} operation corresponding to $S$. 
$L$ was either executed on line \ref{line:read-wll1} of $R$ or line \ref{line:read-wll2} of $R$. 
Since $S$ is successful, \var{data.ptr} cannot have changed between $L$ and $S$. 
If \var{data.ptr} was \var{NULL} in this interval, then the if statement on line \ref{line:read-val1} would have evaluated to true, and $S$ would not have been executed. 
Therefore, \var{data.ptr} is valid at all configurations between $L$ and $S$, which includes all configurations between line \ref{line:read-val1} of $R$ and $S$.
\end{proof}

\begin{claim}
\label{clm:swcopy-null}
Suppose the \wsc{} on line \ref{line:swcopy-sc2} of a \swcopy{} operation is successful, then \var{data.ptr} is valid at all configurations between line \ref{line:swcopy-sc1} of the \swcopy{} and the \wsc{}.
\end{claim}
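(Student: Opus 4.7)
The plan is to let $S_1$ denote the \wsc{} on line \ref{line:swcopy-sc1}, $L_2$ the \wll{} on line \ref{line:swcopy-ll2}, and $S_2$ the successful \wsc{} on line \ref{line:swcopy-sc2}, and to split the interval $[S_1,S_2]$ into two pieces, $[S_1,L_2]$ and $[L_2,S_2]$, handling each separately. The segment $[L_2,S_2]$ is the easier one: because $S_2$ is actually executed, the \texttt{if}-test on line \ref{line:swcopy-if} must have succeeded, so $L_2$ was a successful \wll{} whose returned data had \var{ptr} $\neq$ \var{NULL}; equivalently, \var{data.ptr} is valid at the linearization point of $L_2$. Since $S_2$ is a successful \wsc{} paired with $L_2$, Property 4 of Section \ref{sec:wllsc-proof} implies that no \wsc{} on this \var{data} object linearizes strictly between $L_2$ and $S_2$, so \var{data.ptr} does not change in this sub-interval and therefore remains valid throughout.

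For the sub-interval $[S_1,L_2]$ I would argue by contradiction: suppose \var{data.ptr} equals \var{NULL} at some configuration $C \in [S_1,L_2]$. The key enumeration is the list of \wsc{}s that could possibly execute on this \dest{}'s \var{data} during $[S_1,L_2]$. They are of only two kinds: (i) $S_1$ itself at the left endpoint, which writes the pair $\langle\var{empty},\var{src}\rangle$ whose pointer component is a valid, non-\var{NULL} address; and (ii) reader \wsc{}s performed on line \ref{line:read-sc}, each of which writes a pair of the form $\langle v,\var{NULL}\rangle$. No other writer \wsc{} on this \var{data} can appear, because a \dest{} is single-writer and the writer is still in the middle of the current \swcopy{} and has not started another \wt{} or \swcopy{} on the same \dest{}.

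The crucial monotonicity consequence is that once \var{data.ptr} becomes \var{NULL} inside $[S_1,L_2]$, no later step in this sub-interval can make it valid again: reader \wsc{}s only overwrite it with \var{NULL}, and no writer \wsc{} other than $S_1$ occurs. Hence, if \var{data.ptr} were \var{NULL} at any $C \in [S_1,L_2]$, it would still be \var{NULL} at the linearization point of $L_2$, contradicting what we established above. Thus \var{data.ptr} is valid throughout $[S_1,L_2]$, and combined with the $[L_2,S_2]$ analysis this gives the claim. The main obstacle I expect is making the enumeration in (i)--(ii) genuinely exhaustive: I have to carefully use the single-writer assumption on \dest{} together with the sequential structure of the writer's code to rule out any other writer \wsc{} on this \var{data} during the interval, and also verify from the pseudocode that the only reader modification of \var{data} is the one on line \ref{line:read-sc}, which indeed writes \var{NULL} into the pointer component.
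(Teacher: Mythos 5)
Your proposal is correct and follows essentially the same route as the paper's proof: handle $[L_2,S_2]$ via the success of the \texttt{if}-test and the no-intervening-\wsc{} semantics, then argue by contradiction on $[S_1,L_2]$ using the fact that only the first \wsc{} of a \swcopy{} can make \var{data.ptr} valid, so \var{NULL} would persist and contradict validity at $L_2$. Your enumeration of the possible \wsc{}s is just a slightly more explicit version of the paper's observation that line \ref{line:swcopy-sc1} is the only instruction that sets \var{data.ptr} to a valid value.
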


\begin{proof}
Let $Y$ be a \swcopy{} operation with a successful \wsc{} operation $S$ on line \ref{line:swcopy-sc2}.
For $S$ to be executed, the if statement on line \ref{line:swcopy-if} must evaluate to true, which means that the \wll{} operation $L$ on line \ref{line:swcopy-ll2} must have been successful.
Since $S$ is a successful \wsc{}, \var{data.ptr} cannot have changed between $L$ and $S$. 
Again, due to the if statement on line \ref{line:swcopy-if}, \var{data.ptr} is valid in this interval.

It remains to show that \var{data.ptr} is valid between lines \ref{line:swcopy-sc1} and \ref{line:swcopy-ll2}.
Suppose for contradiction that \var{data.ptr} is \var{NULL} in this interval.
The only operation that can change \var{data.ptr} to be valid is on line \ref{line:swcopy-sc1} of \swcopy{}, so \var{data.ptr} would have remained \var{NULL} until the end of $Y$. 
This contradicts the fact that \var{data.ptr} is valid between $L$ and $S$.
Therefore \var{data.ptr} is valid at all configurations between line \ref{line:swcopy-sc1} of $Y$ and $S$.
\end{proof}

The following two claims show that the linearization points of each \swcopy{} operation is well-defined and lie within its execution interval.

\begin{claim}
\label{clm:lin-exists} 
  There is exactly one successful \wsc{} instruction during a \swcopy{} $Y$ that sets \var{data.ptr} to \var{NULL} and this \wsc{} instruction is either from line \ref{line:swcopy-sc2} of $Y$ or line \ref{line:read-sc} of some \rd{}. Furthermore, this \wsc{} instruction is executed after the first \wsc{} of $Y$.
\end{claim}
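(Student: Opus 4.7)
The plan is to analyze the possible state transitions of \var{data.ptr} during the execution of the \swcopy{} operation $Y$, exploiting the single-writer restriction that only $Y$ can be an active \swcopy{} or \wt{} on \var{D} during its own execution interval.

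First, I would catalog the \wsc{} instructions in Figure \ref{alg:swcopy} that can set \var{data.ptr} to \var{NULL}. The only candidates are line \ref{line:swcopy-sc2} (inside \swcopy{}) and line \ref{line:read-sc} (inside \rd{}); the \wsc{} on line \ref{line:swcopy-sc1} of $Y$ installs the valid pointer \var{src}, and the \wsc{} on line \ref{line:write-sc} of \wt{} cannot execute during $Y$ by the single-writer restriction. Since by invariant \var{data.ptr} is \var{NULL} immediately before $Y$ begins and $Y$'s first \wsc{} (line \ref{line:swcopy-sc1}) stores a valid pointer, any successful \wsc{} during $Y$ that sets \var{data.ptr} to \var{NULL} must occur strictly after line \ref{line:swcopy-sc1}, which handles the final sentence of the claim.

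For uniqueness, I would argue that after the first NULL-setting \wsc{} succeeds, no subsequent \wsc{} during $Y$ can also set \var{data.ptr} to \var{NULL}. Once \var{data.ptr} becomes \var{NULL}, it remains \var{NULL} for the rest of $Y$, because only a fresh execution of line \ref{line:swcopy-sc1} could install a valid pointer and $Y$ is the only possible writer but never revisits that line. Then by Claims \ref{clm:rd-null} and \ref{clm:swcopy-null}, any successful \wsc{} on line \ref{line:read-sc} or line \ref{line:swcopy-sc2} requires \var{data.ptr} to be valid throughout the window back to its matching \wll{}, which is impossible once \var{data.ptr} stays \var{NULL}.

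For existence, I would perform a case analysis on $Y$'s \wll{} at line \ref{line:swcopy-ll2}. If that \wll{} is successful and observes a valid \var{data.ptr}, then either $Y$'s \wsc{} at line \ref{line:swcopy-sc2} succeeds, or it fails because some other successful \wsc{} linearized between lines \ref{line:swcopy-ll2} and \ref{line:swcopy-sc2}; by the single-writer restriction that intervening \wsc{} must be a reader's line \ref{line:read-sc}, which writes \var{NULL} into \var{data.ptr}. If the \wll{} is successful and observes \var{NULL}, then \var{data.ptr} must have transitioned from valid to \var{NULL} between lines \ref{line:swcopy-sc1} and \ref{line:swcopy-ll2}, which again can only be caused by a reader's \wsc{} on line \ref{line:read-sc}. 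If the \wll{} is unsuccessful, then by the semantics of \wllsc{} some successful \wsc{} on \var{data} was concurrent with it, and by the same reasoning it must be a reader's line \ref{line:read-sc}. The main obstacle is keeping the case split tight and invoking the single-writer constraint cleanly in every branch to rule out all other candidate \wsc{} instructions.
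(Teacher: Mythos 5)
Your overall route matches the paper's: both rest on Claims \ref{clm:rd-null} and \ref{clm:swcopy-null}, the single-writer restriction, and the fact that \var{data.ptr} is \var{NULL} when $Y$ begins. But there is a genuine gap in how you obtain that last fact. You invoke it as an ``invariant,'' yet establishing it is exactly as hard as the claim itself: \var{data.ptr} is \var{NULL} at the start of $Y$ only because the \emph{previous} \swcopy{} had its unique \var{NULL}-resetting \wsc{} before it ended --- that is, because the claim already holds for the previous \swcopy{}. The paper resolves this circularity with an explicit alternating induction over the sequence of \swcopy{} operations ($P_i$: \var{data.ptr} is \var{NULL} at the start of $Y_i$; $Q_i$: the claim holds for $Y_i$; show $P_1$, then $P_i \Rightarrow Q_i$ and $Q_i \Rightarrow P_{i+1}$). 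Without that scaffolding, your argument assumes a fact that can only be derived from the claim it is trying to prove.

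A second, smaller omission: you never argue that the first \wsc{} of $Y$ (line \ref{line:swcopy-sc1}) actually \emph{succeeds}, yet your existence argument depends on it --- for instance, your second case infers that \var{data.ptr} ``transitioned from valid to \var{NULL}'' after line \ref{line:swcopy-sc1}, which presupposes that it became valid there. The paper closes this by noting that \var{data.ptr} is \var{NULL} throughout the window between the \wll{} on line \ref{line:swcopy-ll1} and the \wsc{} on line \ref{line:swcopy-sc1} (only a \swcopy{}'s first \wsc{} can make it valid), so by Claim \ref{clm:rd-null} no reader's \wsc{} can succeed in that window, and hence the \wsc{} on line \ref{line:swcopy-sc1} cannot fail. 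With the induction set up and this step added, your uniqueness and existence case analyses go through and essentially coincide with the paper's.
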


\begin{proof}
Let $Y_i$ be the $i$th \swcopy{} operation in $E$. 
The order is well defined because there can be only one \swcopy{} operation active at a time. 
We proceed by induction on $i$, alternating between two different propositions. 
Let $P_i$ be the proposition that \var{data.ptr} equals \var{NULL} at the start of $Y_i$. 
Let $Q_i$ be the proposition that Claim \ref{clm:lin-exists} holds for $Y_i$. 
$P_1$ acts as our base case and for the inductive step, we show that $P_i$ implies $Q_i$ and that $Q_i$ implies $P_{i+1}$.

For the base case, we know that \var{data.ptr} is initialized to \var{NULL} and it can only be changed to something that is valid by the first \wsc{} of a \swcopy{} operation. 
Therefore, \var{data.ptr} remains \var{NULL} until the first \swcopy{} operation.

To show that $P_i$ implies $Q_i$, we use the same argument to argue that \var{data.ptr} is \var{NULL} between the first \wll{}/\wsc{} pair performed by $Y_i$. 
By Claim \ref{clm:rd-null}, no \wsc{} operation from a \rd{} can succeed between the first \wll{}/\wsc{} pair of $Y_i$.
This means the first \wsc{} performed by $Y_i$ (on line \ref{line:swcopy-sc1}) is guaranteed to succeed and set \var{data.ptr} to something valid. 
Between the first \wsc{} of $Y_i$ and the end of $Y_i$, the only two operations that could possibly change $Y_i$ are the \wsc{} on line \ref{line:swcopy-sc2} of $Y_i$ and line \ref{line:read-sc} of a \rd{} operation.
During this interval, if there are no successful \wsc{} operations from line \ref{line:read-sc}, then the \wsc{} on line \ref{line:swcopy-sc2} of $Y_i$ is guaranteed to execute and succeed.
This shows that there is at least one successful \wsc{} from line \ref{line:swcopy-sc2} or line \ref{line:read-sc} between the first \wsc{} and the end of $Y_i$.
By Claim \ref{clm:swcopy-null}, the \wsc{} on line \ref{line:swcopy-sc2} cannot succeed if \var{data.ptr} is \var{NULL}, and similarly for the \wsc{} on line \ref{line:read-sc} (Claim \ref{clm:rd-null}).
Since the \wsc{}s on lines \ref{line:swcopy-sc2} and \ref{line:read-sc} both set \var{data.ptr} to \var{NULL}, at most one such \wsc{} can succeed between the first \wsc{} of $Y_i$ and the end of $Y_i$. Therefore, $P_i$ implies $Q_i$.

All that remains is to show that $Q_i$ implies $P_{i+1}$. 
From $Q_i$, we know that \var{data.ptr} gets set to \var{NULL} between the first \wsc{} of $Y_i$ and the end of $Y_i$.
It will remain \var{NULL} until the first \wsc{} of $Y_{i+1}$, which means it is \var{NULL} at the beginning of $Y_{i+1}$.
\end{proof}

\begin{claim}
\label{clm:lin-contains}
  The linearization point of each \swcopy{} operation $Y$ lies between the first \wsc{} and the end of $Y$.
\end{claim}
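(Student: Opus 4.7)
The plan is to split into two cases based on the origin of the unique successful \wsc{} instruction $S$ during $Y$ that resets \var{data.ptr} to \var{NULL}, whose existence and uniqueness is given by Claim \ref{clm:lin-exists}. By definition of the linearization point, either $S$ is the \wsc{} on line \ref{line:swcopy-sc2} of $Y$ itself (in which case $Y$ linearizes at its own line \ref{line:swcopy-src}), or $S$ is the \wsc{} on line \ref{line:read-sc} of some concurrent \rd{} $R$ (in which case $Y$ linearizes at line \ref{line:read-src} of $R$).

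In the first case there is essentially nothing to prove: line \ref{line:swcopy-src} is executed by $Y$ in sequential program order strictly between the first \wsc{} on line \ref{line:swcopy-sc1} and the end of $Y$, so the linearization point lies in the required interval.

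The second case is the substantive one, and is where I expect the main obstacle. I need to argue that line \ref{line:read-src} of $R$, although it occurs inside a different operation, is nevertheless sandwiched between the first \wsc{} of $Y$ and the end of $Y$. For the upper bound, the linearization point is on line \ref{line:read-src} of $R$ and is therefore strictly before $R$'s own \wsc{} on line \ref{line:read-sc}, which is $S$; by Claim \ref{clm:lin-exists}, $S$ itself occurs before the end of $Y$, so line \ref{line:read-src} of $R$ does too. For the lower bound, I use Claim \ref{clm:rd-null}: since $R$'s \wsc{} on line \ref{line:read-sc} succeeds, \var{data.ptr} was valid at the corresponding \wll{} of $R$ (the successful one on line \ref{line:read-wll1} or \ref{line:read-wll2}). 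But by Claim \ref{clm:lin-exists}, during $Y$'s execution interval \var{data.ptr} can only be valid after $Y$'s first \wsc{}, since before that \wsc{} \var{data.ptr} equals \var{NULL} (and before $Y$ begins, \var{data.ptr} is \var{NULL} by proposition $P_i$ in the induction of Claim \ref{clm:lin-exists}). Therefore $R$'s \wll{} happens after $Y$'s first \wsc{}, and since line \ref{line:read-src} of $R$ is after $R$'s \wll{} in program order, line \ref{line:read-src} of $R$ also happens after $Y$'s first \wsc{}.

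Combining the two bounds in the second case completes the proof. The only subtle point is ensuring that the valid \var{data.ptr} that $R$'s \wll{} observes is the pointer installed by $Y$'s first \wsc{} (and not some earlier \swcopy{}'s pointer); this is guaranteed by the fact that between successive \swcopy{}s, \var{data.ptr} is \var{NULL}, which is exactly what the induction $P_i \Rightarrow Q_i \Rightarrow P_{i+1}$ in Claim \ref{clm:lin-exists} establishes.
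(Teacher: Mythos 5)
Your proposal is correct: the case split and the use of Claim~\ref{clm:lin-exists} for the upper bound match the paper exactly, and the only place you diverge is in how you establish the lower bound in the second case. There the paper's argument is a one-liner from the \wll{}/\wsc{} semantics: since $S$ (the \wsc{} on line~\ref{line:read-sc} of $R$) succeeds, no successful \wsc{} on \var{data} can occur between $R$'s corresponding \wll{} $L$ and $S$; the first \wsc{} of $Y$ is successful and precedes $S$, so $L$ must come after it, and hence so does line~\ref{line:read-src}. You instead route through the validity of \var{data.ptr}: by Claim~\ref{clm:rd-null} it is valid at $L$, while it is \var{NULL} up to $Y$'s first \wsc{}. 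That works, but it forces you to rule out that the valid pointer $L$ saw was installed by an \emph{earlier} \swcopy{} --- your parenthetical ``before $Y$ begins, \var{data.ptr} is \var{NULL} by $P_i$'' overclaims, since $P_i$ only speaks of the configuration at the start of $Y_i$ and \var{data.ptr} is valid during parts of $Y_{i-1}$. You do patch this in your final paragraph (between successive \swcopy{}s \var{data.ptr} is \var{NULL}, so a validity window from an earlier \swcopy{} would be separated from $S$ by a configuration where \var{data.ptr} is \var{NULL}, contradicting Claim~\ref{clm:rd-null}'s ``valid at all configurations up to $S$''), so the argument closes; it is just more circuitous than the paper's direct appeal to the fact that a successful \wsc{} certifies that \var{data} did not change since the matching \wll{}.
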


\begin{proof}
A \swcopy{} operation $Y$ is either linearized at line \ref{line:swcopy-src} of its own operation or line \ref{line:read-src} of a \rd{} operation $R$. 
Clearly, this lemma holds in the former case, so we focus on the latter. 

By Lemma \ref{clm:lin-exists}, we know that the \wsc{} operation $S$ on line \ref{line:read-sc} of $R$ happens between the first \wsc{} of $Y$ and the end of $Y$. 
This means that the successful \wll{} operation $L$ corresponding to $S$ must have happened after the first \wsc{} of $Y$ and before $S$. 
From the code, we can see that line \ref{line:read-src} of $R$ (which is the linearization point of $Y$) happens between $L$ and $S$. 
By transitivity, the linearization point of $Y$ happens between the first \wsc{} of $Y$ and the end of $Y$.
\end{proof}

The next claim is useful for arguing that \var{data.ptr} is \var{NULL} at all configurations during a \wt{} operation and at all configurations between the beginning and the first \wsc{} of a \swcopy{}.

\begin{claim}
\label{clm:null-interval}
\var{data.ptr} can only be valid between the first \wsc{} of a \swcopy{} and the end of the \swcopy{}.
\end{claim}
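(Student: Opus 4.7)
The plan is to reduce this claim to the inductive structure already established in Claim \ref{clm:lin-exists}, specifically to the proposition $P_i$ stating that \var{data.ptr} equals \var{NULL} at the start of each \swcopy{} $Y_i$. The key observation is that only one line in the entire algorithm can change \var{data.ptr} from \var{NULL} to a valid pointer, namely the first \wsc{} of a \swcopy{} (line \ref{line:swcopy-sc1}); every other \wsc{} in the pseudo-code (lines \ref{line:swcopy-sc2}, \ref{line:write-sc}, \ref{line:read-sc}) writes \var{NULL} into \var{data.ptr}. Moreover, \var{data.ptr} is initialized to \var{NULL}.

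First, I would argue that at any configuration $C$ outside a \swcopy{} operation, \var{data.ptr} is \var{NULL}. Let $Y_i$ denote the $i$-th \swcopy{} (well-defined since at most one \swcopy{} is active at a time, since \swcopy{} is single-writer and cannot run concurrently with a \wt{}). By $P_i$ from Claim \ref{clm:lin-exists}, \var{data.ptr} is \var{NULL} at the start of $Y_i$. By $Q_i$, \var{data.ptr} is set to \var{NULL} by the unique successful \wsc{} between the first \wsc{} of $Y_i$ and the end of $Y_i$; hence \var{data.ptr} is \var{NULL} at the moment $Y_i$ ends. In the interval strictly between the end of $Y_i$ and the start of $Y_{i+1}$, no invocation of line \ref{line:swcopy-sc1} can occur (that line only appears inside a \swcopy{}), and every other modifying step writes \var{NULL}, so \var{data.ptr} remains \var{NULL} throughout.

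Next, I would handle the interval inside $Y_i$ from its start up to (but not including) its first \wsc{}. Again by $P_i$, \var{data.ptr} is \var{NULL} at the start of $Y_i$. In this prefix of $Y_i$, no \wsc{} instruction from $Y_i$ itself has executed yet, and the only steps by other processes that touch \var{data.ptr} are \rd{} \wsc{}s, which only write \var{NULL}; so \var{data.ptr} stays \var{NULL} throughout this sub-interval as well. Combining this with the previous paragraph, \var{data.ptr} is \var{NULL} at every configuration that does not lie strictly between the first \wsc{} of some \swcopy{} and the end of that same \swcopy{}, which is exactly what the claim asserts.

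The main obstacle is really just bookkeeping: one must be careful that the ``end of the \swcopy{}'' is the right closing endpoint (rather than, e.g., the successful \wsc{} that nulls \var{data.ptr}), because the unique \wsc{} that resets \var{data.ptr} to \var{NULL} may happen strictly before the \swcopy{} returns (it may be performed by a helping \rd{} on line \ref{line:read-sc}); after that reset but before $Y_i$ ends, \var{data.ptr} is already \var{NULL}. This is still consistent with the claim, since the claim only says \var{data.ptr} \emph{can} be valid in the stated interval, not that it must be. Hence no stronger statement is being asserted and the argument goes through cleanly by appealing to Claim \ref{clm:lin-exists}.
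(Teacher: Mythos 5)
Your proposal is correct and follows essentially the same route as the paper's proof: observe that the only instruction that can make \var{data.ptr} valid is the first \wsc{} of a \swcopy{} (line \ref{line:swcopy-sc1}), and invoke the result established in Claim \ref{clm:lin-exists} to conclude that \var{data.ptr} is reset to \var{NULL} before that \swcopy{} ends, so it is \var{NULL} everywhere outside those intervals. Your version is just a more explicit unrolling of the paper's one-paragraph argument (appealing directly to the propositions $P_i$ and $Q_i$ from the induction inside the proof of Claim \ref{clm:lin-exists} rather than to its statement), and the extra care about the closing endpoint is consistent with what the paper asserts.
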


\begin{proof}
\var{data.ptr} is initially \var{NULL} and the only instruction that sets \var{data.ptr} to something valid is the first \wsc{} of a \swcopy{} instruction.
By Claim \ref{clm:lin-exists}, we know that after this \wsc{} instruction and before the end of the \swcopy{}, \var{data.ptr} is set back to \var{NULL}.
Therefore, \var{data.ptr} can only be valid between the first \wsc{} of a \swcopy{} and the end of the \swcopy{}.
\end{proof}

Finally, we prove the main claim.

\begin{claim}
\label{clm:cur-val}
  If \var{data.ptr} is \var{NULL}, then \var{data.val} stores the current value of \var{D}.
\end{claim}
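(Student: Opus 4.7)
The plan is to prove by induction on the step sequence of $E$ the invariant that at every configuration $C$, if \var{data.ptr} equals \var{NULL} at $C$ then \var{data.val} at $C$ equals the current value of \var{D}. The base case is immediate: initially $\langle\text{\var{data.val}},\text{\var{data.ptr}}\rangle = \langle\bot,\text{NULL}\rangle$ and no modifying operation has linearized, so the current value is $\bot$.

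For the inductive step I would case split on each step that can affect either the two relevant fields (only a successful \wsc{} on lines \ref{line:write-sc}, \ref{line:swcopy-sc1}, \ref{line:swcopy-sc2}, or \ref{line:read-sc} can do so) or the current value (which changes exactly at the linearization point of a \wt{} or \swcopy{}). A \swcopy{} linearized at line \ref{line:swcopy-src} of itself or at line \ref{line:read-src} of a helping \rd{} performs only a local read at that moment; by Claim \ref{clm:swcopy-null} or Claim \ref{clm:rd-null} respectively, \var{data.ptr} is valid at that configuration, so the invariant imposes no requirement and is vacuously preserved even though the current value changes. A successful \wsc{} on line \ref{line:write-sc} of a \wt{} atomically stores $\langle\text{\var{new\_val}},\text{NULL}\rangle$ and is itself the \wt{}'s linearization point, so \var{data.val} and the current value are updated to \var{new\_val} simultaneously. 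A successful first \wsc{} on line \ref{line:swcopy-sc1} leaves \var{data.ptr} valid, so the invariant has no obligation afterward.

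The crucial case is a successful \wsc{} on line \ref{line:swcopy-sc2} of some \swcopy{} $Y$, or on line \ref{line:read-sc} of a \rd{} $R$ helping an active \swcopy{} $Y$. By Claim \ref{clm:lin-exists}, this \wsc{} is the unique one that resets \var{data.ptr} back to \var{NULL} during $Y$, and it writes into \var{data.val} exactly the value that $Y$ read at its linearization point (line \ref{line:swcopy-src} of $Y$ or line \ref{line:read-src} of $R$), which is by definition the value $Y$ is specified to store. The remaining and main obstacle is justifying that $Y$ is the most recently linearized modifying operation at this configuration. For this I would appeal to the single-writer restriction on \var{D}: no \wt{} and no other \swcopy{} can be active during $Y$, so no modifying operation can linearize between $Y$'s own linearization point (which lies within $Y$'s execution interval by Claim \ref{clm:lin-contains}) and this \wsc{}. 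Hence the stored value equals the current value and the invariant is re-established; unsuccessful \wsc{}s leave the state unchanged, completing the induction.
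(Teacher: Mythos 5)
Your overall strategy is the same as the paper's: induction over the execution, a case split on the step $S$ between consecutive configurations, vacuous preservation whenever \var{data.ptr} is valid, and the single-writer restriction to rule out intervening modifying operations in the crucial case of the resetting \wsc{} (lines \ref{line:swcopy-sc2} and \ref{line:read-sc}). Those cases are handled correctly. However, there is one genuine gap: you never rule out an \emph{unsuccessful} \wsc{} on line \ref{line:write-sc}. A \wt{} operation is linearized at line \ref{line:write-sc} unconditionally, so if that \wsc{} could fail, the current value of \var{D} would change to \var{new\_val} at that step while \var{data.val} stays unchanged and \var{data.ptr} stays \var{NULL} --- exactly a violation of the invariant. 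Your closing sentence, ``unsuccessful \wsc{}s leave the state unchanged,'' is false for this step, because the state relevant to the invariant includes the current value, which does change at the \wt{}'s linearization point. Your case enumeration covers steps that change the two fields and steps that change the current value, but this particular step falls into the second category and is not among the cases you analyze.

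The paper closes this hole explicitly: for a \wt{} operation $W$, Claim \ref{clm:null-interval} gives that \var{data.ptr} is \var{NULL} throughout $W$ (since no \swcopy{} can be concurrent with $W$ by the single-writer restriction), and the contrapositive of Claim \ref{clm:rd-null} then shows that no \wsc{} issued by a \rd{} can succeed during $W$; since those are the only \wsc{}s that could be concurrent with $W$, both the \wll{} on line \ref{line:write-ll} and the \wsc{} on line \ref{line:write-sc} are guaranteed to succeed, so the problematic case never arises. Adding this short argument (which uses only claims you already cite) completes your induction; without it the inductive step does not go through.
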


\begin{proof}
We will prove this by induction on the execution history $E$. 
The fields of \var{D} are initialized so that \var{data.ptr} stores \var{NULL} and \var{data.val} stores the initial value of \var{D}. 
Therefore this claim holds for the initial configuration. 
Suppose, for induction, that this claim holds for some configuration $C$, we need to show that it holds for the next configuration $C'$. 
If \var{D.data.ptr} is valid in $C'$, then the claim is vacuously true, so suppose \var{D.data.ptr} is \var{NULL} at $C'$. 
Let $S$ be the step between $C$ and $C'$. 
There are four cases for $S$; either (1) $S$ is a successful \wsc{} operation from line \ref{line:swcopy-sc2}, (2) $S$ is a successful \wsc{} operation from \ref{line:read-sc}, (3) $S$ is a successful \wsc{} operation from line \ref{line:write-sc}, or (4) $S$ is not a successful \wsc{} on \var{data}.

In the first case, $S$ is executed by a \swcopy{} operation $Y$, which is linearized on line \ref{line:swcopy-src} of $Y$. 
The value written into \var{data.val} by $S$ is equal to the value of the source location at the linearization point of $Y$. 
There cannot be any \swcopy{} or \wt{} operation linearized between the linearization point of $Y$ and $S$, so \var{data.val} stores the current value of \var{D} at $C'$.

For the second case, we first show that $S$ occurs during some a \swcopy{} operation. 
Due to the if statement on line \ref{line:read-val1}, $S$ can only be successful if \var{data.ptr} is valid. 
By Claim \ref{clm:null-interval}, \var{data.ptr} can only be valid during a \swcopy{} operation, which means that $S$ must occur during some \swcopy{} operation $Y$.
By Claim \ref{clm:lin-exists}, we know that $Y$ is linearized on line \ref{line:read-src} of the \rd{} operation that executed $S$.
Since there can only be one \swcopy{} or \wt{} at a time, there cannot be any other \swcopy{} or \wt{} operation linearized between the linearization point of $Y$ and $S$. 
Since the value written into \var{data.val} by $S$ is equal to the value of the source location at the linearization point of $Y$, \var{data.val} stores the current value of \var{D} at $C'$.

For case (3), $S$ is the linearization point of a \wt{} operation and $S$ writes the value of that \wt{} operation into \var{data.val}. 
This means \var{data.val} stores the current value of \var{D} at $C'$.

Finally, for the fourth case, suppose $S$ is not a successful \wsc{} on \var{data}.
This means the value of \var{data.val} will remain unchanged between $C$ and $C'$. 
By the inductive hypothesis, \var{data.val} stores the current value of \var{D} at $C$, so it suffices to show that
there are no \wt{} or \swcopy{} operation linearized at $S$.
By Claims \ref{clm:rd-null} and \ref{clm:swcopy-null}, \var{data.ptr} is valid at the linearization point of a \swcopy{} operation. 
Since \var{data.ptr} is \var{NULL} both before and after $S$, no \swcopy{} operation can be linearized at $S$.
To show that no \wt{} operations can be linearized at $S$, it suffices to show that the \wsc{} at the linearization point of a \wt{} operation is always successful.
Let $W$ be a \wt{} operation by process $p$.
The only \wsc{} operations on \var{data} that can be concurrent with $W$ are from \rd{} operations.
By Claim \ref{clm:null-interval}, \var{data.ptr} is \var{NULL} for the duration of $W$, and by Claim \ref{clm:rd-null}, no \wsc{} from a \rd{} operation can succeed during $W$. Therefore, both the \wll{} and the \wsc{} performed by $W$ are guaranteed to succeed.
\end{proof}


Suppose $R$ is a completed \rd{} operation that returns $v$.
As previously noted, to prove that Figure \ref{alg:swcopy} is a linearizable implementation of a \dest{} object, it suffices to show that there exists a step during $R$ such that the value of the \dest{} object at that step is equal to $v$. We linearize $R$ at that step. If there are multiple operations linearized at the same step, \rd{} operations are always linearized last. Note that there cannot be multiple \wt{} or \swcopy{} operations linearized at the same step.

There are five possible return points for a \rd{} operation. 
If $R$ returns on lines \ref{line:read-sc} or \ref{line:read-val2}, then on lines \ref{line:read-sc} or \ref{line:read-wll3} (respectively), we know that \var{data.val} equals $v$ and \var{data.ptr} equals \var{NULL}.
If $R$ returns on line \ref{line:read-val1}, then either on line \ref{line:read-wll1} or line \ref{line:read-wll2}, \var{data.val} equals $v$ and \var{data.ptr} equals \var{NULL}.
By Claim \ref{clm:cur-val}, \var{data.val} stores the current value of the \dest{} object whenever \var{data.ptr} is \var{NULL}, so for these three return points there exists a step during $R$ such that $v$ is the current value.

Now suppose $R$ returns on lines \ref{line:read-old1} or \ref{line:read-old2} (i.e. the case where $R$ reads and returns the value in \var{D.old}).
There must have been two successful \wsc{}s, $S_1$ and $S_2$, on \var{D.data} during $R$. 
In the case where $R$ returns on line \ref{line:read-old1}, these two successful \wsc{} operations occurred during the \wll{}s on lines \ref{line:read-wll1} and \ref{line:read-wll2}. 
In the case where $R$ returns on line \ref{line:read-old2}, $S_1$ was the one that caused the \wsc{} on line \ref{line:read-sc} to fail and $S_2$ occurred during the \wll{} on line \ref{line:read-wll3}.
By Claims \ref{clm:rd-null} and \ref{clm:swcopy-null}, there cannot be two successful \wsc{}s from lines \ref{line:swcopy-sc2} or \ref{line:read-sc} in a row without a successful \wsc{} from line \ref{line:swcopy-sc1} of \swcopy{} or line \ref{line:write-sc} of \wt{} in between.
Therefore, there must have been a successful \wsc{} either from line \ref{line:swcopy-sc1} of \swcopy{} or line \ref{line:write-sc} of \wt{} during $R$. We'll use $S$ to denote this \wsc{} operation. 
In both cases, the line immediately before $S$ updates \var{D.old} by first performing a \wll{} on \var{data}. 
By Claim \ref{clm:null-interval}, \var{data.ptr} equals \var{NULL} during this \wll{} operation and since the only \wsc{} operations that could potentially cause it to fail are by \rd{} operations, by Claim \ref{clm:rd-null}, this \wll{} is guaranteed to succeed. 
By Claim \ref{clm:cur-val}, \var{data.val} stores the current value $v'$ at the time of this \wll{} operation. 
This value gets written into \var{old}, so \var{old} stores the current value immediately after this step. Since there is only a single \wt{} or \swcopy{} at a time, \var{old} still contains the current value immediately before $S$. 
$R$ reads and returns the value of \var{old} at its last step so there are two cases. 
Either $R$ reads $v'$ from \var{old} or it reads something newer. 
If $R$ reads $v'$, then it returns the current value of \var{D} at the step immediately before $S$ (which happens during $R$). 
If $R$ reads something newer, then \var{old} must have been updated between $S$ and the end of $R$. 
This can only happen on line \ref{line:swcopy-ll1} or on line \ref{line:write-ll}, and we've already argued that \var{old} stores the current value of \var{D} on these two lines. 
Therefore, in either case, $R$ returns a value that was the current value of \var{D} at some point during $R$.

\section{LL/SC from CAS}
\label{sec:llsc}

Now we have all the tools we need to implement LL/SC from CAS (Result \ref{result:llsc}). 
We begin, in Section \ref{sec:llsc-alg}, by presenting an algorithm that works whenever there is at most one outstanding LL per process.
Then in Section \ref{sec:fixed-k}, we show how to generalize this to support $k$ outstanding LLs per process.

\subsection{Implementation of LL/SC from CAS}
\label{sec:llsc-alg}

This algorithm is almost identical to our algorithm for weak LL/SC from CAS 
(Section \ref{sec:wllsc-alg}). To ensure that the \LL{} operation always succeeds, 
we use \swcopy{} to atomically read and announce the current buffer 
(lines \ref{line:wll-read} and \ref{line:wll-announce} of Figure \ref{alg:weakllsc}). 
This means that the announcement array needs to be an array 
of \dest{} objects (from Section \ref{sec:swcopy-alg}) rather than raw pointers.
Other than that, the algorithm remains the same.
Figure \ref{alg:llsc} shows the difference between this algorithm and the weak 
LL/SC algorithm from Figure \ref{alg:weakllsc}.

  \begin{figure*}

  \begin{lstlisting}[linewidth=.99\columnwidth, numbers=left]
Destination<Buffer*> A[P]; 

struct LLSC {
  Buffer* buf;
  ...
  Value[L] $\textbf{LL}_i$() {
    A[pid].swcopy(&buf);          @\label{line:ll-swcopy}@
    Buffer* tmp = A[pid].read();
    return tmp->val; } @\label{line:ll-ret}@
};
  \end{lstlisting}
  \caption{Amortized constant time implementation of $L$-word LL/SC from CAS. The algorithm is exactly the same as Algorithm \ref{alg:weakllsc} except for the parts that are shown. Code for process with id \var{pid}.
  Note that the type of the announcement array has changed, so way we \rd{} from and \wt{} to the announcement array is also different.}
  \label{alg:llsc}
  \end{figure*}

This algorithm uses $O((M+P^2)L)$ pointer-width read, write, CAS objects just like 
in Figure \ref{alg:weakllsc}, but it also uses $P$ \dest{} objects for the announcement 
array. From Result \ref{result:swcopy}, we know that $P$ \dest{} objects can be 
implemented in constant time and $O(P^2)$ space, so this algorithm achieves the 
bounds in Result \ref{result:llsc}.



\subsection{LL/SC Correctness Proof (outline)}
\label{sec:llsc-proof}

In the proof of correctness for our \wllsc{} algorithm, the key property 
we made use of is that at the linearization point of a successful \var{D.}\wll{} operation,
\var{A[pid]} and \var{D.buf} both point to the same buffer. 
We linearize our \LL{} operation from Figure \ref{alg:llsc} so that the same property holds.
At the linearization point of the \var{swcopy} operation on 
line \ref{line:ll-swcopy}, both \var{A[pid]} and \var{D.buf} store the same value, so we linearize \LL{} operations at this point.
\SC{} and \VL{} operations are linearized just as they were in the \wllsc{} algorithm.
This way, we can basically reuse the proof from Section \ref{sec:wllsc-proof}. 
The first two properties hold without modification because \SC{} operations are exactly the same in both algorithms. 
Property 6 is unnecessary because \LL{} operations are always treated as successful. 
For Properties 3, 4 and 5, and Claims \ref{clm:reserved} and \ref{clm:prop4}, we just need to replace line numbers from the \wll{} pseudo-code (in Figure \ref{alg:weakllsc}) with line numbers from the \LL{} pseudo-code (in Figure \ref{alg:llsc}). 
For example line \ref{line:wll-ret} in Figure \ref{alg:weakllsc} would be replaced with line \ref{line:ll-ret} in Figure \ref{alg:llsc}.

\subsection{Handling multiple outstanding LL operations per process}
\label{sec:fixed-k}

\begin{figure}[!htbp]
\begin{minipage}[t]{.42\textwidth}

  \begin{lstlisting}[linewidth=.99\columnwidth, numbers=left]
shared variables:
  Destination<Buffer*> A[P]@\hl{[k]}@; 
  // announcement array

local variables:
  @\hl{list<int> freeSlots;}@
  @\hl{//initialized to \{0,1,...,k-1\}}@
  list@$<$@Buffer*@$>$@ flist;
  list@$<$@Buffer*@$>$@ rlist;
  // initial size of flist: @\hl{2kP}@
  // rlist is initially empty

struct Buffer {
  // Member Variables
  Value[L] val;
  int pid;
  bool seen;

  void init(Value[L] initialV) {
    copy initialV into val
    pid = -1; seen = 0; } };

struct LLSC {
  // Member Variables
  Buffer* buf;
  
  // Constructor
  LLSC(Value[L] initialV) {  
    buf = new Buffer(); 
    buf->init(initialV); }
\end{lstlisting}
\end{minipage}\hspace{.3in}
\begin{minipage}[t]{.51\textwidth}
\StartLineAt{28}
\begin{lstlisting}[linewidth=.99\textwidth, xleftmargin=5.0ex, numbers=left]
  <Value[L], @\hl{int}@> $\textbf{LL}$() {
    @\hl{int slot = freeSlots.pop();}@
    A[pid][slot].swcopy(&buf);           @\label{line:llk-lin}@
    Buffer* tmp = A[pid][slot].read();
    return <tmp->val, @\hl{slot}@>; }

  void @\hl{\textbf{CL}}@(int slot) {
    A[pid][slot].write(NULL);
    freeSlots.push(slot); }

  bool $\textbf{VL}$(@\hl{int slot}@) {
    Buffer* old = A[pid][slot].read();
    return buf == old; } 

  bool $\textbf{SC}$(Value[L] newV, @\hl{int slot}@) {
    Buffer* old = A[pid][slot].read();
    Buffer* newBuf = flist.pop();    
    newBuf->init(newV);          
    bool b = CAS(&buf, old, newBuf);     @\label{line:sck-lin}@
    if(b) retire(old);
    else flist.add(newBuf);          
    A[pid][slot].write(NULL);
    @\hl{freeSlots.push(slot);}@
    return b; }

  void retire(Buffer* old) {
    rlist.add(old);                  
    if(rlist.size() == @\hl{2kP}@) {    @\label{line:scan-if}@
      list@$<$@Buffer*@$>$@ reserved = [];
      for(int i = 0; i < P; i++)
        @\hl{for(int j = 0; j < k; j++)}@
          reserved.add(A[i][j].read());
      newlyFreed = rlist \ reserved; 
      rlist.remove(newlyFreed)       
      flist.add(newlyFreed); }}};    
  \end{lstlisting}
\end{minipage}
\caption{Amortized constant time implementation of $L$-word LL/SC from CAS supporting $k$ outstanding LLs by a single process. Code for process with id \var{pid}. The difference between this and Figures \ref{alg:weakllsc} + \ref{alg:llsc} are highlighted.}
\label{alg:llsc-fixed-k}
\end{figure}

So far, we've worked under the assumption that each process only has at most a single outstanding LL operation at any time. 
In this section, we slightly modify the algorithm from Section \ref{sec:llsc-alg} so that it can handle up to $k$ outstanding \LL{} operations per process.
The new algorithm has the same time complexity and $\Theta((M+kP^2)L)$ space complexity.

Before describing the algorithm, we first have to modify the \LL{}/\SC{} interface slightly to efficiently support large $k$.
In addition to returning a value, \LL{} now also has to return a handle.
This handle is passed as an argument to future \SC{}, and \VL{} operations.
In a valid execution, we require that whenever process $p$ performs a \var{X}.\SC{}(\var{newV}, \var{h}) or a \var{X}.\VL{}(\var{h}) operation, \var{h} must have been the handle returned by the previous \var{X}.\LL{} operation by process $p$.
The same interface modifications were done in~\cite{moir1997practical}.

Pseudo-code for the new algorithm is shown in Figure \ref{alg:llsc-fixed-k} with all the major modifications highlighted.
The main change to give each process $k$ announcement slots rather than a single one.
To keep track of which ones are free, each process maintains a local list in a variable called \var{freeSlots}, which is initialized to $\{0, ..., k-1\}$.
Process $p_i$ begins an \LL{} operation by popping an index $i$ off the \var{freeSlots} list and using that index to determine which of its announcement slots to use.
The \LL{} operation then proceeds as before and it returns $i$ as the handle.
An \SC{}(\var{newV}, \var{slot}) operation uses \var{slot} to determine which of its announcement slots was used by the corresponding \LL{}.
Before returning, the \SC{} operation adds \var{slot} back to \var{freeSlots} so that it can be used again in future \LL{} operations.

We also support a \CL{}(\var{slot}) (Clear-Link) operation which simply adds \var{slot} back to \var{freeSlots} and clears the corresponding location in the announcement array. 
This is used by the programmer to indicate that he or she no longer wishes to perform an \SC{} on that LL/SC variable, which reduces the number of outstanding LL operations.

A process $p$ will never run out of announcement slots because the number of slots that are "in-use" (i.e. not in \var{freeSlots}) is exactly the number of outstanding LL operations by $p$.

In the \retire{} operation, we now have to scan $kP$ announcements, so to make the amortization argument work out, we will make sure this scan happens at most once every $kP$ calls to \retire{}.
This is achieved by initializing the free list (\var{flist}) with $2kP$ buffers and performing a full scan only if the retired list (\var{rlist}) reaches $2kP$ elements (line \ref{line:scan-if}).

All operations are still linearized at the same configurations as before. For example \LL{} operations are still linearized on line \ref{line:llk-lin} and \SC{} operations are still linearized on line \ref{line:sck-lin}.
To see why these modifications preserve the correctness of the original LL/SC algorithm, consider an \LL{}/\SC{} pair by process $p_i$.
Suppose $h$ is the index returned by the \LL{} operation. The key property $A[i][h]$ does not change between the linearization points of \LL{} and \SC{}, just like in the original algorithm.

After applying the same deamortization and set difference algorithm as in Section \ref{sec:wllsc-alg}, we get $\Theta(L)$ time for \LL{} and \SC{}, $O(1)$ time for \VL{} and \CL{}, while using $\Theta((M+kP^2)L)$ space.
This algorithm satisfies all the properties from Result \ref{result:llsc}.

\section{Conclusion and Discussion}
\label{sec:conclusion}

We introduced a new primitive called \var{swcopy} and shown how to implement it efficiently.
We used this primitive to implement constant time LL/SC from CAS in a way that is both space efficient and avoids the use of unbounded sequence numbers.
We believe the \var{swcopy} primitive can simplify the design of many other concurrent algorithms and make reasoning about them more modular.

Our LL/SC from CAS algorithm assumes that $k$, the maximum number of outstanding LLs per process, and $P$ are known in advance.
It would be interesting to see if the same time and space bounds are possible without requiring advanced knowledge of $k$ or $P$, and without unbounded sequence numbers.
One way to support dynamically changing $k$ and $P$ is to implement the announcement array in our algorithms as a linked list~\cite{michael2004hazard}.
However, adding and removing from the linked list would take more than constant time.
	
	\bibliographystyle{abbrv}
	\bibliography{../../biblio} 
\end{document}